\def\hybrid{
        \topmargin -20pt
        \oddsidemargin 0pt
        \headheight 0pt \headsep 0pt
        \textwidth 6.55in % A4 paper
        \textheight 9.5in % A4 paper
        \marginparwidth .875in
        \parskip 5pt plus 1pt \jot = 1.5ex}
\newcommand{\p}{\partial}
\def\moth{\mathsurround=0pt}
\newdimen\zo \zo=0pt
\def\tick{\leaders\hrule height 0.5ex depth 0pt \hskip 0.5pt}
\def\upboxfill{$\moth \setbox\zo\hbox{\tick}%
  \hskip 3pt\hbox to 0pt{$\tick$\hss}\hrulefill \hbox to 7.5pt{$\tick$\hss}$}
\def\dtick{\leaders\hrule height .34pt depth 0.5ex \hskip 0.5pt}
\def\downboxfill{$\moth \setbox\zo\hbox{\dtick}%
  \hskip 2pt\hbox to 0pt{$\dtick$\hss}\hrulefill \hbox to 2pt{$\dtick$\hss}$}
\def\bec{\begin{center}}
\def\ec{\end{center}}
\def\tr{{\rm tr}}
\def\be{\begin{equation}}
\def\ee{\end{equation}}
\def\bea{\begin{eqnarray}}
\def\eea{\end{eqnarray}}
\def\ba{\begin{array}}
\def\ea{\end{array}}
\theoremstyle{definition}
\theoremstyle{theorem}
\newtheorem{claim}{Claim}
\begin{document}

\begin{titlepage}
	
	\rightline{\tt MIT-CTP-5640}
	\hfill \today
	\begin{center}
		\vskip 0.5cm
		
		{\Large \bf {String vertices for the large $N$ limit}
		}
		
		\vskip 0.5cm
		
		\vskip 1.0cm
		{\large {Atakan Hilmi F{\i}rat}}
		
		{\em  \hskip -.1truecm
			Center for Theoretical Physics \\
			Massachusetts Institute of Technology\\
			Cambridge, MA 02139, USA\\
			\tt \href{firat@mit.edu}{firat@mit.edu} \vskip 5pt }

		\vskip 1.5cm
		{\bf Abstract}
		
	\end{center}
	
	\vskip 0.5cm
	
	\noindent
	\begin{narrower}
		\noindent
		String vertices of open-closed string field theory on an arbitrary closed string background with $N$ identical D-branes are investigated when $N$ is large. We identify the relevant geometric master equation and solve it using open-closed hyperbolic genus zero string vertices with a milder systolic constraint. The limits corresponding to integrating out open or closed strings are investigated. We highlight the possible implications of our construction to AdS/CFT correspondence.
	\end{narrower}
	
\end{titlepage}

\tableofcontents

\section{Introduction}

Richness of defining covariant string field theory (SFT) using different field parametrizations has been a blessing and a curse at the same time.\footnote{For reviews of different aspects of SFT, see~\cite{Zwiebach:1992ie,deLacroix:2017lif,Erler:2019loq,Erbin:2021smf,Okawa:2012ica,Erler:2019vhl,Maccaferri:2023vns}. We also highlight the recent developments~\cite{Sen:2019qqg,Sen:2020cef,Sen:2020oqr,Sen:2020ruy,Sen:2020eck,Sen:2021qdk,Sen:2021tpp,Sen:2021jbr,Alexandrov:2021shf,Alexandrov:2021dyl,Agmon:2022vdj,Eniceicu:2022nay,Alexandrov:2022mmy,Eniceicu:2022dru,Chakravarty:2022cgj,Eniceicu:2022xvk,Erler:2020beb,Erbin:2021hkf,Scheinpflug:2023osi,Scheinpflug:2023lfn,Maccaferri:2022yzy,Maccaferri:2023gcg,Okawa:2022sjf,Konosu:2023pal,Konosu:2023rkm,Grigoriev:2021bes,Doubek:2020rbg,Berkovits:2021eny,Cho:2023khj,Manki}.} This choice is intimately connected to various mathematical structures on Riemann surfaces, such as quadratic differentials~\cite{saadi1989closed, Erbin:2022rgx}, minimal-area metrics~\cite{Zwiebach:1990nh,Headrick:2018dlw,Headrick:2018ncs,Naseer:2019zau}, hyperbolic metrics~\cite{Moosavian:2017qsp, Moosavian:2017sev, Costello:2019fuh, Cho:2019anu, Firat:2021ukc,Wang:2021aog, Ishibashi:2022qcz}, Liouville theory~\cite{Firat:2023glo, Firat:2023suh}, and complex projective structures~\cite{Costello:2019fuh, dumas2009complex}. These distinct ways of approaching the same problem have illuminated many corners of SFT and the theory of Riemann surfaces in the past.

However, among all this plethora of choices, the most natural parametrization for string theory in its full generality is still not clear. Nevertheless, the answer is obvious in certain regimes: it is advantageous to consider the cubic theory constructed using Witten's vertex as far as classical open strings are concerned~\cite{Witten:1985cc}. Having a cubic theory is the primary reason why the open SFT is solvable. Maybe not so obviously, polyhedral vertices defined by Strebel differentials appear to be the most natural choice for the classical closed SFT, evidenced by having the best possible convergence behavior for the effective tachyon potential in bosonic closed SFT~\cite{Belopolsky:1994sk,Belopolsky:1994bj}. It is widely held belief among practitioners that this parametrization has the best chance of providing solutions to closed SFT in future. In fact, there is a Strebel differential underlying Witten's vertex, so polyhedral vertices are the natural generalization of Witten's construction for open strings to closed strings. 

Unfortunately the situation becomes murky beyond classical theories and/or upon coupling closed and open strings. It is an important problem to find the best possible way to parametrize interactions of SFT in different regimes to make off-shell computations as efficient as possible---even though the theory would still be non-polynomial in general~\cite{Sonoda:1989sj}. In this note, we propose a set of string vertices for the (oriented, bosonic) open-closed SFT in~\textit{the large $N$} limit based on open-closed hyperbolic string vertices of Cho~\cite{Cho:2019anu}, which appears to be a quite natural choice for the SFT in this regime. The large $N$ limit, or~\textit{the planar limit}, here refers to the open-closed SFT in the presence of large number of identical D-branes. The SFT in this limit has recently been investigated by Maccaferri, Ruffino, and Vo\v{s}mera from the algebraic point of view~\cite{Maccaferri:2023gof} and they argued that it is sufficient to consider genus zero surfaces with arbitrary number of open string boundaries and bulk/boundary punctures. Here we elucidate the geometric counterparts of their results, which implicates the relevant master equation is
\begin{align}
	\p \mathcal{V}_{pl} + {1 \over 2} \{\mathcal{V}_{pl}, \mathcal{V}_{pl} \} + \Delta_o^{(1)}  \mathcal{V}_{pl}  = 0 \, ,
\end{align}
for string vertices in the planar limit $\mathcal{V}_{pl}$, refer to~\eqref{eq:PlanarBV}. This geometric recursion relation was originally noted in~\cite{Zwiebach:1997fe} without elaboration nor acknowledging its significance to the large $N$ limit.

\begin{figure}[t]
	\centering
	\includegraphics[width=0.56\textwidth,height=0.53\textwidth]{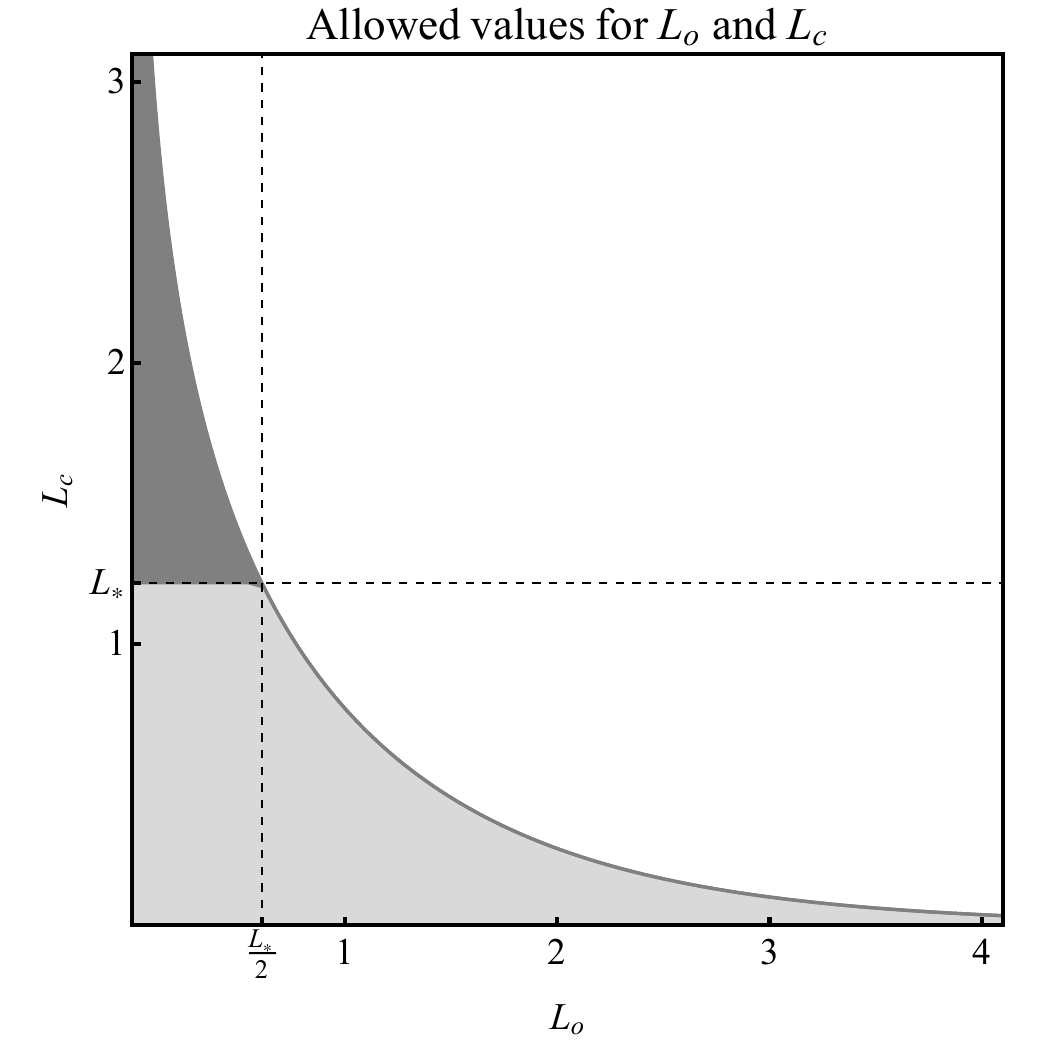}
	\caption{\label{fig:Constraint}Allowed values for the border lengths $L_c, L_o$ for open-closed hyperbolic vertices (light gray region)~\cite{Cho:2019anu}. Here $L_\ast \approx 1.22$, see~\eqref{eq:Condition}. In the large $N$ limit we are allowed to use $L_c > L_\ast$ as well (dark gray region).}
\end{figure}
To summarize our results, we construct a family of open-closed hyperbolic string vertices (with genus zero) parameterized by two positive numbers $L_c, L_o$ obeying the constraint
\begin{align}
	\sinh {L_c} \, \sinh {L_o} \leq 1 \, ,
\end{align}
in the large $N$ limit, see figure~\ref{fig:Constraint}. Here $L_c$ and $L_o$ stand for the lengths of the borders associated with closed and open strings respectively, see section~\ref{sec:first}. It is worth mentioning that there is no constraint on $L_c$ unlike the general open-closed hyperbolic SFT~\cite{Cho:2019anu}. So we investigate the limit $L_c \to 0$, $L_o \to \infty$ ($L_c \to \infty$, $L_o \to 0$, respectively), which corresponds to the situation for which open (closed) SFT deformed by couplings with on-shell closed (open) strings. We argue these string vertices are polyhedral in either case---i.e. they are characterized by Strebel differentials.\footnote{An introduction to Strebel differentials can be found in~\cite{strebel1984quadratic}. We employ terminology used in~\cite{Erbin:2022rgx}.}
We observe the former limit leads to Witten's open SFT deformed by the Ellwood invariant~\cite{zwiebach1992interpolating,Ellwood:2008jh}, while the latter limit is novel and leads to an explicit construction of closed SFT deformed by the tadpole considered in~\cite{Maccaferri:2023gof}.

This note consists of two sections, conclusion, and two appendices. In section~\ref{sec:first}, we summarize the basics of open-closed string vertices, including the geometric master equation and hyperbolic construction by Cho~\cite{Cho:2019anu}. Here we also briefly comment on the derivation of local coordinates for these vertices by generalizing the formalism of~\cite{Firat:2023glo} to open Riemann surfaces with punctures. In section~\ref{sec:N} we take the large $N$ limit for string vertices and introduce~\textit{the geometric planar master equation}. We subsequently solve it using hyperbolic geometry and explore the limits of vertices corresponding to parametrizations for which closed or open strings are taken on-shell. We discuss future prospects of our work in conclusion~\ref{sec:conclusion}. In appendix~\ref{app} decompositions of moduli spaces of dimension-zero and-one open-closed hyperbolic vertices are summarized and in appendix~\ref{appB} the off-shell data for disks with two bulk punctures are characterized when open strings are taken on-shell.

\section{Preliminaries} \label{sec:first}

In this section we summarize the fundamentals of the geometric master equation and the construction by Cho~\cite{Cho:2019anu} in order to establish our conventions. We also briefly discuss the local coordinates for open-closed hyperbolic string vertices. General introduction to the geometric master equation can be found in~\cite{Zwiebach:1992ie} and reader can refer to~\cite{buser2010geometry} for mathematically-oriented introduction to hyperbolic geometry and the collar lemma. The instructions on how to derive local coordinates for hyperbolic vertices are given in~\cite{Firat:2023glo}.

\subsection{Geometric master equation}

We begin by reviewing the geometric master equation relevant for the open-closed SFT~\cite{Zwiebach:1992ie, Zwiebach:1997fe}. We define $\mathcal{M}^{g,n}_{b, \{m_i\}}$ to be the moduli space of Riemann surfaces of genus $g$ with $n$ bulk punctures (corresponding to closed strings) and $m_i$ boundary punctures (corresponding to open strings) at the $i$-th boundary component. Here $b$ is the number of open string boundaries so $i=1, \cdots, b$. For convenience we sometimes use $m= \sum_i m_i$, which would be apparent from the context. Note that $m_i$ can be zero, i.e. there can be open string boundaries carrying no boundary punctures.

Off-shell amplitudes in SFT require selecting local coordinates around each puncture. These are described by a collection of maps $f_i (w)$ from (half-)unit disks $0 < |w| \leq 1$ to a Riemann surface for closed (open) strings. The overall phase of the coordinates around bulk punctures is irrelevant.\footnote{This is due to off-shell level-matching of closed strings. See~\cite{Okawa:2022mos,Erbin:2022cyb} for recent attempts to relax this constraint.}. We take the coordinate patches $z=f_i(w)$ on surfaces to be non-overlapping---except possibly at their boundaries. All possible choices of local coordinates around punctures with these constraints define the bundle $\widehat{\mathcal{P}}^{g,n}_{b, \{m_i\}} \to \mathcal{M}^{g,n}_{b, \{m_i\}}$. String vertices $\mathcal{V}$ are then defined to be the following formal sum of singular chains $\mathcal{V}^{g,n}_{b, \{m_i\}} $ of this bundle~\cite{Zwiebach:1997fe}:
\begin{align} \label{eq:SV}
\mathcal{V}
= \sum_{g, n, b, \{m_i\}} \hbar^p \, \kappa^q \,
\mathcal{V}^{g,n}_{b, \{m_i\}} \, .
\end{align}
The sum runs over all non-negative values of $g,n,b,m_i$ with the restriction
\begin{subequations}
	\begin{align}
	&n \geq 3 \quad \text{for} \quad g=b=0 \, , \\
	&n \geq 1 \quad \text{for} \quad g=1, \, b=0 \, , \\
	&m_1 \geq 3 \quad \text{for} \quad g=0, \, b=1 \, .
	\end{align}
\end{subequations}
The parameter $\hbar$ and open string coupling constant $\kappa$ are formal variables and their powers $p$ and $q$ in~\eqref{eq:SV} are given by
\begin{align}
p = 2g + {n \over 2} + b - 1, \quad \quad
q = 4g + 2n + 2b + m - 4 
\, .
\end{align}

In order to consistently implement the gauge invariance of SFT string vertices $\mathcal{V}$ are demanded to satisfy~\textit{the geometric master equation}~\cite{Zwiebach:1992ie,Zwiebach:1997fe}
\begin{align} \label{eq:BV}
\partial \mathcal{V} =
- {1 \over 2} \{\mathcal{V}, \mathcal{V}\} 
- \hbar \Delta \mathcal{V}  \, ,
\end{align}
which encodes a homological recursion relation for $\mathcal{V}^{g,n}_{b, \{m_i\}}$. Here $\partial$ is the boundary operator for chains on the bundle $\widehat{\mathcal{P}}^{g,n}_{b, \{m_i\}} \to \mathcal{M}^{g,n}_{b, \{m_i\}}$ and operations $\{\cdot, \cdot \}$ and $\Delta$ are given by
\begin{subequations}
\begin{align}
&\{\cdot, \cdot \}  = \{\cdot, \cdot \}_c + \{\cdot, \cdot \}_o \, , \\
&\Delta = \Delta_c + \Delta_o= \Delta_c + \Delta_o^{(1)} + \Delta_o^{(2)} \, ,
\end{align}
\end{subequations}
where $\{\cdot, \cdot \}_{c,o}$ is the operation of gluing bulk (boundary) punctures on disjoint surfaces by the plumbing fixture and symmetrization of the labels of remaining punctures. Recall that the plumbing fixture for the local coordinates $w_1, w_2$ around two bulk or boundary punctures are given by
\begin{align} \label{eq:plumbing}
	w_1 w_2 = e^{2 \pi i \theta / L_c}, \quad \text{and} \quad
	w_1 w_2 = - 1 \, ,
\end{align}
respectively. We have to twist-sew the local coordinates around bulk punctures, which means we have to consider all possible surfaces resulting from taking $0\leq \theta < L_c$. Similarly, the operation $\Delta_c$ denotes twist-sewing a pair of bulk punctures on the same surface, $\Delta_o^{(1)} $ denotes sewing a pair of boundary punctures on the same boundary, and $\Delta_o^{(2)} $ denotes sewing a pair of boundary punctures on the same surface \textit{but} on different boundaries. Making the distinction between $\Delta_o^{(1)} $ and $\Delta_o^{(2)} $ is going to be important for the large $N$ limit.

\subsection{Open-closed hyperbolic string vertices}

We summarize the key points of the construction of~\cite{Cho:2019anu} in this subsection. First, we define the set\footnote{This definition excludes the annulus and single bulk-punctured disk. We comment on them below.}
\begin{align} \label{eq:Cho}
\widetilde{\mathcal{V}}^{g,n}_{b, \{m_i\}} (L_o, L_c) = \left\{\Sigma \in 
\mathcal{M}^{g,n}_{b, \{m_i\}}(L_o, L_c) \; \big| \; \text{sys}(\Sigma) \geq L_c  \quad \text{and} \quad
\text{psys}(\Sigma) \geq L_o
\right\} \, ,
\end{align}
where $\mathcal{M}^{g,n}_{b, \{m_i\}}(L_o, L_c)$ is the moduli space of Riemann surfaces (endowed with hyperbolic metric $K=-1$) of genus $g$ with $n$ geodesic boundaries of length $L_c$ (called \textit{$c$-borders}) and $b$~\textit{piecewise}-geodesic boundaries with $2m_i$ pieces for $i=1, \cdots, b$ whose alternating piecewise components have length $L_o$. The sides of length $L_o$ are~\textit{$p$-sides} and the remaining sides are~\textit{$b$-sides}. When $m_i = 0$ the boundary is called a~\textit{$b$-border}, which we consider a special kind of $b$-side.

The systole $\text{sys}(\Sigma)$ is defined as the length of the shortest non--contractible closed geodesic on the surface $\Sigma$ non-homotopic to a $c$-border. Notice the length of the $b$-border can affect the value of the systole by our definition. Defining $p$-systole $\text{psys}(\Sigma)$ precisely, on the other hand, first requires introducing the notion of~\textit{$p$-geodesic}. A $p$-geodesic is a simple nontrivial open geodesic of a Riemann surface whose end points are on the $b$-sides and/or $b$-borders and it is the shortest curve among its homotopy class where gliding along $b$-sides are allowed. Then the $p$-systole $\text{psys}(\Sigma) $ is defined as the length of the shortest $p$-geodesic of the surface $\Sigma$ non-homotopic to a $p$-side, in accord with the definition of the ordinary systole.

The string vertices $\mathcal{V}^{g,n}_{b, \{m_i\}} (L_o, L_c)$ are constructed via grafting semi-infinite flat cylinders of circumference $L_c$ to $c$-borders and grafting semi-infinite flat strips of side length $L_o$ to $p$-sides to surfaces in $\widetilde{\mathcal{V}}^{g,n}_{b, \{m_i\}} (L_o, L_c) $
\begin{subequations} \label{eq:HyperbolicVertices}
\begin{align} 
\mathcal{V}^{g,n}_{b, \{m_i\}}  (L_o, L_c) = \text{gr}' \left(\widetilde{\mathcal{V}}^{g,n}_{b, \{m_i\}} \right)  \, ,
\end{align}
where $ \text{gr}'$ stands for this grafting operation. Resulting surfaces have $n$ bulk punctures, $m_i$ boundary punctures on $b$ boundary components. These are further endowed with
\begin{align} \label{eq:Rest}
&\widetilde{\mathcal{V}}^{0,1}_{1, \{m_1 = 0\}} (L_c) =
\left\{
\text{A flat circle of circumference $L_c$}
\right\} \, ,
&\widetilde{\mathcal{V}}^{0,0}_{2, \{m_1 = 0,  m_2 = 0\}} = \emptyset \, ,
\end{align}
\end{subequations}
and the actual vertices $\mathcal{V}^{0,1}_{1, \{m_1 = 0\}} (L_c) $ and $\mathcal{V}^{0,0}_{2, \{m_1 = 0,  m_2 = 0\}}$ are given by grafting a semi-infinite flat cylinder. The latter one is still empty. The vertices $\mathcal{V}^{g,n}_{b, \{m_i\}}  (L_o, L_c)$ are non-empty in general~\cite{Costello:2019fuh,Cho:2019anu}.

The novelty of Cho's construction was identifying the collar lemma for the relevant Riemann surfaces to open-closed vertices, which eventually placed the following conditions on $L_c$ and $L_o$
\begin{subequations}  \label{eq:Condition}
\begin{align}
0 < L_c \leq L_\ast  \quad \quad \text{and} \quad \quad
\sinh L_c \, \sinh L_o \leq 1 \, , 
\end{align}
where~\textit{the critical length $L_\ast$} solves
\begin{align}
\sinh {L_\ast \over 2} \sinh L_\ast = 1 \implies
L_\ast \approx 1.21876 \, .
\end{align}
\end{subequations}
Upon assuming these constraints, the string vertices~\eqref{eq:HyperbolicVertices} can be shown to solve the geometric master equation~\eqref{eq:BV} using this generalized collar lemma~\cite{Cho:2019anu}. 

Hyperbolic vertices of Cho is perfectly suited to construct any open-closed SFT, but it may not be the most optimal parametrization as far as the large $N$ limit~\textit{and} the physics of closed strings are concerned. As we pointed out earlier, polyhedral vertices appear to better suited to classical theories than arbitrary choices---including generic hyperbolic ones. Polyhedral closed string vertices correspond to the $L_c \to \infty$ limit of closed hyperbolic string vertices~\cite{Costello:2019fuh,Firat:2023glo} and this is essentially the primary reason behind their ``simplicity''. This limit forces external strings to overlap, i.e. interactions to take place on a polyhedron. As a result the coefficients appearing in the SFT action gets minimized, leading an efficient parametrization. See the discussion in~\cite{Belopolsky:1994sk,upcoming_work}.

It would be desirable to find a solution to the geometric master equation where an analogous limit exists for the planar limit given its semi-classical nature. So our objective should be relaxing the conditions~\eqref{eq:Condition} on hyperbolic string vertices to allow for a limit where $L_c$ can be taken arbitrarily large. We are going to tackle this issue in the next section.

\subsection{A brief remark on the local coordinates}

It is beneficial to briefly comment on the local coordinates of open-closed hyperbolic string vertices at this point, which will be useful later. This subsection is aimed to provide a ``recipe'' for deriving them, so we mostly emphasize novel ingredients due to having open string boundaries and boundary punctures. For more details on this recipe and the actual computations refer to~\cite{Firat:2021ukc,Firat:2023glo,Firat:2023suh, Hadasz:2006vs}.

We begin, as in closed strings, with the Fuchsian equation\footnote{The following logic applies to all open-closed hyperbolic vertices, except for $\mathcal{V}^{0,1}_{1, 0}(L_c)$. The local coordinates for this vertex is given by the Cayley map~\eqref{eq:Cayley} as we shall see in section~\ref{sec:MRV}.}
\begin{align} \label{eq:Fuchsian}
	\partial^2 \psi + {1 \over 2} T(z) \psi = 0 \, ,
\end{align}
on an arbitrary coordinate patch $z$ of a punctured Riemann surface with open string boundaries $\Sigma$. The objects $\psi(z)$ and $T(z)$ transform as
\begin{align} \label{eq:Transform}
\psi(z) = \left({\partial  \widetilde{z} \over \partial z}\right)^{-1/2} \widetilde{\psi} ( \widetilde{z}) \, ,
\quad \quad
T(z) = \left({\partial  \widetilde{z} \over \partial z}\right)^{2} \widetilde{T} ( \widetilde{z}) + \{\widetilde{z}, z\} \, ,
\end{align}
under the holomorphic transition maps $z \to \widetilde{z}(z)$, so the equation~\eqref{eq:Fuchsian} is coordinate-independent. Here $\{\cdot, \cdot\}$ is the Schwarzian derivative
\begin{align}
	\{\widetilde{z}, z\}  = {\p^3 \widetilde{z} \over \p \widetilde{z}} -
	{3 \over 2} \left( {\p^2 \widetilde{z} \over \p \widetilde{z}} \right)^2 \, .
\end{align}

Next, we demand ``stress-energy tensor'' $T(z)$ in~\eqref{eq:Fuchsian} is chosen such that the solutions $\psi(z)$ can have hyperbolic $PSL(2, \mathbb{R})$ monodromy around all punctures.\footnote{The definition of monodromy around boundary punctures is delicate. We will address this below.} We additionally demand
\begin{align} \label{eq:Cond}
	\overline{T(z)} = T(\overline{z}) \, ,
\end{align}
on any coordinate patch $z$ that contains a part of the open string boundary $R$. We take $\mathrm{Im} z \geq 0$ for these patches. This additional condition~\eqref{eq:Cond} is required to guarantee the existence of a normalized basis of solutions $\psi^{\pm}_R(z)$ to~\eqref{eq:Fuchsian} that satisfies
\begin{align} \label{eq:Normalization}
	\Psi_R(z) \equiv \begin{bmatrix}
	\psi^{-}_R(z) \\ \psi^{+}_R(z)
	\end{bmatrix} \, ,\quad \quad
	\overline{\Psi_R(z)} = \Psi_R(\overline{z}) \, ,\quad \quad
	\partial \Psi^T_R(z) \sigma_2 \Psi_R(z) = i \, ,
\end{align}
where $\sigma_2$ is the second Pauli matrix. As we shall see shortly, the existence of $\Psi_R(z)$ is necessary to construct open string boundaries in the geometry. But for now just notice the solutions $\psi^{\pm}_R(z)$ are real and regular along the open string boundary $z = \overline{z}$. We relate any other normalized basis of solutions to~\eqref{eq:Fuchsian} $\Psi(z) $ to the basis $\Psi_R(z) $ with a complex matrix $S$
\begin{align} \label{eq:Rel}
	\Psi(z) = S \Psi_R (z) \, , \quad \quad
	S \in PSL(2, \mathbb{C}) \, .
\end{align}

Now, recall that a hyperbolic metric (possibly with singularities) $ds^2 = e^{\varphi} |dz|^2$ can be related to the solutions to the Fuchsian equation by the equality~\cite{Firat:2021ukc}
\begin{align} \label{eq:Soln}
	e^{-\varphi / 2} = \pm {1 \over 2} \Psi^T(z) \sigma_2 \overline{\Psi(z)} \, ,
\end{align}
upon setting $T(z) = - (\partial \varphi)^2/2 + \partial^2 \varphi$. Realizing hyperbolic monodromies around punctures simultaneously gives rise to the hyperbolic metric on~\textit{$c$-bordered} surfaces as shown in~\cite{Firat:2021ukc} which can be used to derive the local coordinates around bulk punctures. However, this isn't sufficient for open-closed vertices: we must supplement the condition of $ds^2$ being a hyperbolic metric with boundary conditions at open string boundary $R$. Since the construction of~\cite{Cho:2019anu} takes them to be geodesics of the metric $ds^2$, the boundary condition at $R$ is given by
\begin{align} \label{eq:BC}
	{1 \over 2}e^{-\varphi/2 }n^a \partial_a \varphi \big|_R = 0 \quad \overset{\text{locally}}{\implies} \quad
	( \partial - \overline{\partial}) e^{-\varphi/2} \big|_R  = 0 \, .
\end{align}
Here $n^a$ is the normal vector to $R$. Upon using~\eqref{eq:Rel},~\eqref{eq:Soln}, the boundary condition~\eqref{eq:BC} becomes
\begin{align}
	0 = ( \partial - \overline{\partial}) \Psi^T_R(z) S^T \sigma_2 \overline{S} \Psi_R(  \overline{z} ) \big|_R
	= \left[ \partial \Psi^T_R(z) S^T \sigma_2 \overline{S} \Psi_R(  \overline{z} ) 
	- \Psi^T_R(z) S^T \sigma_2 \overline{S}  \overline{\partial} \Psi_R(  \overline{z} ) \right]_R \, .
\end{align}
This is a scalar expression, so we can take its trace over the solution basis and use its cyclicity to rewrite the boundary condition as
\begin{align} \label{eq:tr}
	0 = \tr \left[ \left(\Psi_R(  \overline{z} ) \partial \Psi^T_R(z) -\overline{\partial} \Psi_R(  \overline{z}) \Psi^T_R(z)\right)  S^T \sigma_2 \overline{S}
	\right]_R = i \, \tr \left[ \sigma_2 S^T \sigma_2 \overline{S} \right] \, .
\end{align}
In the second line we have used the normalization~\eqref{eq:Normalization} of the solution $\Psi_R$ evaluated at the open string boundary $R$. We see that we have to impose $\tr ( \sigma_2 S^T \sigma_2 \overline{S} ) = 0$ in order to make sure open string boundaries are geodesics of the hyperbolic metric. It shouldn't be too surprising that imposing boundary conditions also lead to monodromy-like conditions. We emphasize that it was crucial to impose $\overline{T(z)} = T(\overline{z})$, otherwise the geodesic boundary conditions cannot be imposed generally. Moreover, we see the basis $\Psi_R$~\textit{cannot} satisfy the extra trace condition~\eqref{eq:tr}, as $S=1$ results in $\tr ( \sigma_2 S^T \sigma_2 \overline{S} ) = 2$. This makes sense, taking $\Psi = \Psi_R$ in~\eqref{eq:Soln} leads to a singular metric by~\eqref{eq:Soln}.

We are now ready to introduce the notion of monodromy around boundary punctures. We do this through ``doubling trick''. That is, we extend the patch containing a boundary puncture to $\mathrm{Im}(z) \leq 0$ from $\mathrm{Im}(z)\geq0$ and define the associated $T(z)$ on the lower-half plane via $T(z) = \overline{T(\overline{z})}$, which makes it holomorphic for all $z$. The monodromy around a boundary puncture is then given by the ordinary monodromy of this puncture in the extended patch. Recall hyperbolic $PSL(2,\mathbb{R})$ monodromy introduces double poles in $T(z)$ at bulk punctures whose residue is related to the length of the $c$-borders~\cite{Firat:2021ukc}. The definition above implies there are also double poles at the positions of boundary punctures. Their residues are related to the length of the $p$-sides.

The problem described above is the hyperbolic monodromy problem in the context of open-closed hyperbolic string vertices. The solutions $\Psi(z)$ that realize hyperbolic monodromy can subsequently be used to derive explicit expressions for the local coordinates around any puncture. Note that this monodromy problem is expected to have a unique solution for a given surface by the uniformization theorem~\cite{Firat:2023glo}. Similar to closed hyperbolic string vertices, finding $T(z)$ that leads to correct monodromies involves specifying $\dim \mathcal{M}^{g,n}_{b,{m_i}} = 6g -6 +2n + 3b + m $ real~\textit{accessory parameters} and this is the non-trivial part of the procedure. The counting for the accessory parameters can be made similar to closed string counterparts~\cite{Firat:2023suh}.

Let us illustrate the additional complexity open strings introduce more explicitly in the simplest situation: the disk with $n$ bulk and $m$ boundary punctures. We describe this geometry on the upper-half plane (UHP) $z$ with $\mathrm{Im} z \geq 0$ by placing the bulk punctures at $z=\xi_i \in \mathbb{C}$ for $i=1,\cdots, n$ and boundary punctures at  $z=x_j \in \mathbb{R}$ for $j=1, \cdots, m$. The most general expression for $T(z)$ that may solve the monodromy problem is then given by
\begin{align} \label{eq:DiskT}
	T(z; \mathcal{V}^{0,n}_{1,m}(L_{o,j}, L_{c,i})) = \sum_{i=1}^n \left[ {\Delta_i \over (z-\xi_i)^2} + {\Delta_i \over (z-\overline{\xi_i})^2} + {c_i \over z- \xi_i} + {\overline{c_i} \over z - \overline{\xi_i}} \right]
	+ \sum_{j=1}^m \left[ {\Delta^R_j \over (z- x_j)^2} + {c_j^R \over z-x_j} \right] \, ,
\end{align}
according to our discussion above. The condition~\eqref{eq:Cond} is satisfied upon taking $\Delta_i, \Delta_j^R, c_j^R \in \mathbb{R}$. Here $\Delta_i$ and $\Delta_j^R$ are related to the $c$-borders/$p$-sides lengths $L_{c,i}, L_{o,j}$ around punctures via
\begin{align}
	\Delta_i = {1 \over 2} + {1 \over 2} \left({L_{c,i} \over 2 \pi}\right)^2 \, , \quad \quad
	\Delta^R_j = {1 \over 2} + {1 \over 2} \left({L_{o,j} \over \pi}\right)^2  \, .
\end{align}
Even though we consider distinct $c$-border/$p$-side lengths above, we only need to consider the situation $L_{c,i} = L_c$ and $L_{o,j} = L_o$ for SFT purposes. We suppress the subscript on $\Delta, \Delta^R$ when $c$-border/$p$-side lengths are taken equal. 

Here $c_i \in \mathbb{C}, c_j^R \in \mathbb{R}$ are the accessory parameters of this situation. Since $z=\infty$ has a trivial monodromy there are three linear constraints $c_i, c_j^R$ satisfy. These are
\begin{subequations} \label{eq:Cons}
\begin{align}
	&\sum_{i=1}^n \left(c_i + \overline{c_i}\right) + \sum_{j=1}^m c_j^R = 0 \, , \\
	&\sum_{i=1}^n \left(2 \Delta_i +  c_i \xi_i + \overline{ c_i \xi_i } \right)
	+ \sum_{j=1}^m \left(\Delta^R_j + c_j^R x_j  \right) = 0 \, , \\ 
	&\sum_{i=1}^n \left(2 \Delta_i \xi_i  + 2 \Delta_i \overline{\xi_i}  +  c_i \xi_i^2 + \overline{c_i \xi_i^2}\right)
	+ \sum_{j=1}^m \left(2 \Delta_j^R x_j + c_j^R x_j^2 \right) = 0 \, .
\end{align}
\end{subequations}
These constraints leave us with $2n + m - 3$ real accessory parameter that needs to be fixed for $\mathcal{V}^{0,n}_{1,m}(L_{o,j}, L_{c,i})$, consistent with the counting above.

Observe that there are no undetermined accessory parameters for the cases $n=0, m=3$ and $n=1, m=1$ and the associated stress-energy tensors for them is simply given by
\begin{subequations} \label{eq:WKB}
\begin{align}
	&T(z; \mathcal{V}^{0,0}_{1,3}(L_{o,1}, L_{o,2}, L_{o,3})) = {\Delta_1^R \over z^2} + {\Delta_2^R \over (z-1)^2} + {\Delta_1^R  + \Delta_2^R  - \Delta_3^R  \over z} + {-\Delta_1^R  -\Delta_2^R  + \Delta_3^R  \over z-1} \, ,\\
	&T(z; \mathcal{V}^{0,1}_{1,1}(L_o, L_c)) = {\Delta \over (z-i)^2} + {\Delta \over (z+i)^2} + {i \, (2\Delta + \Delta^R) \over 2 \, (z-i)}
	+ {-i \, (2\Delta + \Delta^R) \over 2 \, (z+i)} + {\Delta^R \over z^2} \, ,
\end{align}
\end{subequations}
upon implementing constraints~\eqref{eq:Cons}. Here we have used $PSL(2,\mathbb{R})$ freedom on the UHP $z$ to place the punctures at $z=0,1,\infty$ and $z=0,i$ for the $n=0,m=3$ and $n=1, m=1$ cases respectively. 

Except for the cases above, the rest of the open-closed hyperbolic string vertices involves undetermined accessory parameters. Following the ideas of~\cite{Firat:2023glo, Firat:2023suh,Hadasz:2006vs}, we may attempt to write a Polyakov conjecture that would fix them in terms of derivatives of a suitably regularized on-shell Liouville action with FZZT boundary conditions and subsequently construct them using classical conformal bootstrap. In practice, however, this isn't really necessary for Riemann surfaces with open string boundaries as long as one bootstraps the local coordinates of closed Riemann surfaces first. We can use the latter to construct the local coordinates of the former by appropriately ``cutting'' the surfaces along their geodesics. 

\section{String vertices for the large $N$ limit} \label{sec:N}

In this section we consider string vertices for the large $N$ limit. As we shall see, it is sufficient to consider genus $0$ punctured Riemann surfaces with open string boundaries in accord with~\cite{Maccaferri:2023gof}. Subsequently we argue that they obey a version of geometric master equation and show that hyperbolic geometry can be used to solve this equation with a milder constraint relative to~\cite{Cho:2019anu}. We consider limits of these vertices where closed or open strings are taken on-shell and investigate the consequences. 

\subsection{Geometric planar master equation}

In this subsection we derive the string vertices relevant for the open-closed SFT in the large $N$ limit. We begin with the formal sum
\begin{align} \label{eq:FormalVar}
\mathcal{V}
= \sum_{g, n, b, \{m_i\}} \hbar^p \, \kappa^q \, N^b \,
\mathcal{V}^{g,n}_{b, \{m_i\}} \, .
\end{align}
Here we explicitly stripped the number of ``colors'' $N$ resulting from Chan-Paton factors of open strings ending on $N$ identical D-branes from amplitudes and declared it as formal variable. This requires normalizing the trace $\tr$ over Chan-Paton factors. That is, we use
\begin{align}
	\tr(\cdots) \to \tr'(\cdots) \equiv {1 \over N} \tr(\cdots) \, ,
\end{align}
in the SFT interactions resulting from $\mathcal{V}^{g,n}_{b, \{m_i\}} $ now. It is apparent that the normalized trace $\tr'$ stays finite as $N \to \infty$ and the divergence associated with having large number of branes resides in the formal variable in~\eqref{eq:FormalVar}. It is worth mentioning that we take trace over Chan-Paton factors for each boundary separately, which justifies the power of $N$ in~\eqref{eq:FormalVar}

Let us make some further arrangement to the way we write~\eqref{eq:FormalVar}:
\begin{align} \label{eq:See}
\mathcal{V} = 
\hbar
\sum_{g, n, b, \{m_i\}}
(\hbar \kappa^2 N)^{2g + b - 2} N^{2-2g}
\left(  \hbar^{n/2} \kappa^{2n+m} \mathcal{V}^{g,n}_{b, \{m_i\}} \right) \, .
\end{align}
There are few things to observe. First, we can rescale the open string coupling constant $\kappa \to \kappa \hbar^{-1/2}$ and absorb the remaining $\hbar$ dependence to vertices $\mathcal{V}^{g,n}_{b, \{m_i\}}$ by rescaling the string fields. There is an overall $\hbar$, which we can simply drop out using $S/\hbar$ instead of $S$ in SFT, where $S$ is the Batalin-Vilkovisky (BV) action determined by $\mathcal{V}$. This whole operation effectively sets $\hbar =1$. Moreover, we can further scale string fields to absorb $\kappa$ into vertices $\mathcal{V}^{g,n}_{b, \{m_i\}}$, leaving us with the combination $\kappa^2 N$ and $N$ itself. We keep using the notation $\mathcal{V}^{g,n}_{b, \{m_i\}}$ for string vertices even after these scalings.

Now we can consider the large $N$ limit. For this, we define the `t Hooft coupling
\begin{align}
\lambda \equiv \kappa^2 N \, ,
\end{align}
which stays fixed as $N \to \infty$, $\kappa \to 0$. Then~\textit{the planar string vertices} $\mathcal{V}_{pl}$ are given by
\begin{align} \label{eq:PlanarHyperbolicVertices}
\mathcal{V}_{pl} \equiv \lim_{N \to \infty} {\mathcal{V} \over N^2} 
= \sum_{n, b, \{m_i\}}  \lambda^{b -2} \mathcal{V}^{0,n}_{b, \{m_i\}} \, .
\end{align}
We point out only genus $0$ surfaces survive this limit. This result is consistent with the action given in~\cite{Maccaferri:2023gof}.\footnote{Notice we use $\kappa$ for the~\textit{open} string coupling constant, not for the~\textit{closed} string coupling constant like in~\cite{Maccaferri:2023gof}.} We are going to denote these surfaces by~\textit{planar Riemann surfaces} for convenience. Observe that we still work with small string coupling constant in the large $N$ limit so the worldsheet keeps making sense whenever it is unambiguous.

The proper implementation of gauge symmetries in SFT from~\eqref{eq:BV} demands planar string vertices to satisfy the equation ($\hbar = 1$)
\begin{align} \label{eq:PlanarBV}
\p \mathcal{V}_{pl} + {1 \over 2} \{\mathcal{V}_{pl}, \mathcal{V}_{pl} \} + \Delta_o^{(1)}  \mathcal{V}_{pl}  = 0\, ,
\end{align}
which we shall call~\textit{the geometric planar master equation}. This equation describes a recursion relation among the constituents of $\mathcal{V}_{pl}$. This recursion relation is schematically shown in figure~\ref{fig:planar_bv}.
\begin{figure}
	\hspace{-0.3in}
	\includegraphics[width=1.1\textwidth,height=0.35\textwidth]{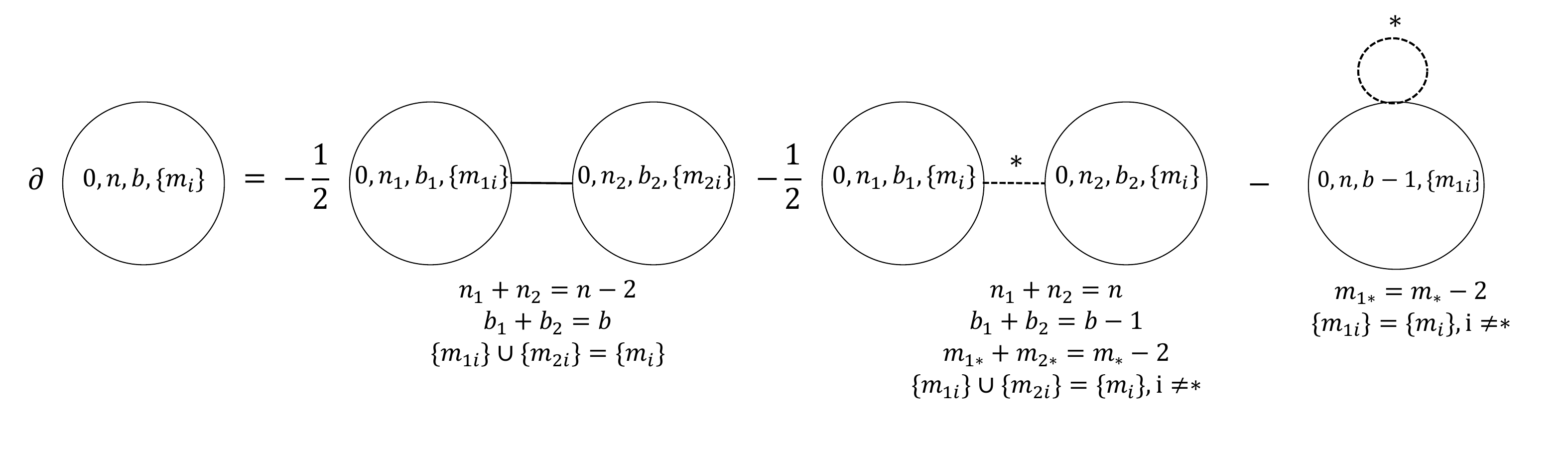}
	\caption{\label{fig:planar_bv}The schematic representation of the recursion relation defined by~\eqref{eq:PlanarBV}. Each blob stands for a string vertex and the solid (dashed) line denotes a collapsed closed (open) string propagator. The star on the dashed line stands for the particular boundary that sewed open string punctures belong. We implicitly sum over all possible vertices that satisfy the relations below the blobs.}
\end{figure}

Compared to the full geometric master equation~\eqref{eq:BV},  $\Delta_c$ and $\Delta_o^{(2)}$ are missing. This can be understood as follows. The operation corresponding to $\Delta_c$ increases the genus by 1, which is suppressed in the large $N$ limit. Other way to put\footnote{Note that the sewing operations themselves carry formal variables after rearrangement~\eqref{eq:See}.}
\begin{align} \label{eq:Dc}
\Delta_c \sim {\lambda^2 \over N^2} \, .
\end{align}
Similarly, the operation corresponding to $\Delta_o^{(2)}$ decreases the number of boundaries by 1 but increases the genus by 1, which gets suppressed in the large $N$ limit as
\begin{align} \label{eq:Do2}
\Delta_o^{(2)} \sim {\lambda \over N^2} \, .
\end{align}
The rest of the operations are clearly $\mathcal{O}(N^0)$ since they don't increase the genus and so they are still present. More precisely, their scalings are given by
\begin{align}
	\{\cdot, \cdot\}_o \sim \lambda \, , \quad \quad
	\{\cdot, \cdot\}_c \sim \lambda^2 \, , \quad \quad
	\Delta_o^{(1)} \sim \lambda \, .
\end{align}
These results are consistent with~\cite{Maccaferri:2023gof}. We point out the geometric planar master equation~\eqref{eq:PlanarBV} has been observed to be a sub-recursion relation of the full geometric master equation~\eqref{eq:BV} in~\cite{Zwiebach:1997fe} long time ago, however its connection to large $N$ limit has gone unnoticed. 

It is beneficial to briefly remark on how $1/N$ corrections work in this framework. As it can be seen from~\eqref{eq:See}, string vertices describing $1/N^{2g}$ corrections are given by including genus $g$ surfaces into consideration. This is consistent with the notion that large $N$ limit is semi-classical and the associated corrections are effectively quantum. We point out solution to~\eqref{eq:PlanarBV} including $1/N$ corrections is only possible if we include surfaces with arbitrary number of genus: corrections truncated to some order does not define a sub-recursion like in the strictly planar limit. That is,
\begin{align}
\mathcal{V}_{pl}^{1PI} 
= \sum_{n, b, \{m_i\}}  \lambda^{b -2} \sum_{g=0}^\infty N^{-2g} \mathcal{V}^{g,n}_{b, \{m_i\}} \, ,
\end{align}
solves~\eqref{eq:PlanarBV} as well. Notice doing this resummation leads to ``1PI action'' for the large $N$ limit in a certain sense, following the logic in~\cite{Sen:2014dqa}. In this theory there are still no non-separating open or closed string degenerations---they are described within vertex and the remaining Feynman regions. In any case, the full open-closed geometric master equation~\eqref{eq:BV} can be considered (and solved) order-by-order in $1/N^2$ with~\eqref{eq:Dc} and~\eqref{eq:Do2}.

\subsection{Planar hyperbolic string vertices}

We use hyperbolic geometry to solve the geometric planar master equation~\eqref{eq:PlanarBV} in this subsection. Our guiding principle is to eliminate the use of collar lemma in the arguments when appropriate, as this was the main culprit behind the placing constraints on $L_c$~\eqref{eq:Condition} for open-closed hyperbolic vertices. Our main result is the following:
\begin{claim}
	The string vertices
	\begin{align} \label{eq:StringVertices}
	\mathcal{V}_{pl} (L_o, L_c)
	= \sum_{n, b, \{m_i\}}  \lambda^{b -2} \mathcal{V}^{0,n}_{b, \{m_i\}}(L_o, L_c) \, ,
	\end{align}
	where $\mathcal{V}^{0,n}_{b, \{m_i\}}  (L_o, L_c) $ defined in~\eqref{eq:HyperbolicVertices}, solve the geometric planar master equation~\eqref{eq:PlanarBV} as long as the border lengths $L_c, L_o$ satisfy the constraint
	\begin{align} \label{eq:Constraint}
		\sinh {L_c} \, \sinh {L_o} \leq 1 \, ,
	\end{align}
	see figure~\ref{fig:Constraint}. We call these vertices planar hyperbolic string vertices.
\end{claim}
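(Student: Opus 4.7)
The strategy is to adapt the proof of the full geometric master equation in~\cite{Cho:2019anu} to the planar setting and to isolate the role of the two constraints in~\eqref{eq:Condition}. I would first verify that the class of genus-zero surfaces is closed under the four operations appearing in~\eqref{eq:PlanarBV}: the boundary operator $\partial$ preserves topology, the plumbings $\{\cdot,\cdot\}_c$ and $\{\cdot,\cdot\}_o$ glue two disjoint planar surfaces, and $\Delta_o^{(1)}$ self-sews two boundary punctures on the same boundary, which splits that boundary into two without changing the genus. A one-line Euler-characteristic check confirms that none of these creates a handle, so the truncation~\eqref{eq:StringVertices} is well defined.

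Next I would classify the possible boundary degenerations of a planar hyperbolic vertex $\mathcal{V}^{0,n}_{b,\{m_i\}}(L_o,L_c)$. By~\eqref{eq:Cho} a surface in $\partial \mathcal{V}^{0,n}_{b,\{m_i\}}$ saturates either $\mathrm{sys}(\Sigma)=L_c$ or $\mathrm{psys}(\Sigma)=L_o$, so it carries a shortest non-trivial closed geodesic of length $L_c$ or a shortest $p$-geodesic of length $L_o$. On a genus-zero surface every non-trivial non-boundary-parallel simple closed geodesic is separating, and an Euler-characteristic computation shows that a $p$-geodesic with endpoints on the same boundary is also necessarily separating; the only non-separating $p$-geodesic compatible with $g=0$ has its endpoints on two distinct boundaries. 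Running the associated plumbings backwards identifies the three boundary contributions with $\{\cdot,\cdot\}_c$, $\{\cdot,\cdot\}_o$, and $\Delta_o^{(1)}$ respectively, and a non-separating $p$-geodesic with both endpoints on a single boundary (the reverse of $\Delta_o^{(2)}$) is excluded because it would force the genus to drop to $-1$. Hence no boundary stratum of the planar moduli space corresponds to $\Delta_c$ or $\Delta_o^{(2)}$, in agreement with~\eqref{eq:PlanarBV}.

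I would then reuse the generalized collar lemma of~\cite{Cho:2019anu} for each of the three surviving operations. The extra constraint $L_c\leq L_\ast$ in~\eqref{eq:Condition}, equivalent to $\sinh(L_c/2)\sinh L_c\leq 1$, enters Cho's argument only when controlling $\Delta_c$: self-sewing two $c$-borders creates a handle whose systole must still be at least $L_c$ after the operation, and this is what the critical length $L_\ast$ guarantees via the collar lemma applied to a closed geodesic of length $L_c$ with half-width $L_c/2$. Because $\Delta_c$ is absent from~\eqref{eq:PlanarBV}, this self-consistency check is vacuous and $L_c\leq L_\ast$ may be dropped. The weaker constraint $\sinh L_c\sinh L_o\leq 1$ is still required, since it is precisely what the generalized collar lemma needs to keep $c$-border collars disjoint from $p$-side strips in the configurations produced by the three remaining operations, none of which introduces a handle.

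The main obstacle is the case-by-case check that, under the weaker hypothesis alone, the boundary chain of every $\mathcal{V}^{0,n}_{b,\{m_i\}}(L_o,L_c)$ is covered with the correct orientation and without overlap by plumbings of lower-complexity planar hyperbolic vertices---in particular that the thresholds $\mathrm{sys}=L_c$ and $\mathrm{psys}=L_o$ continue to mesh with the twist-sewing parameters~\eqref{eq:plumbing} so that the full boundary stratum is swept out exactly once. Once this realization is verified, the identity $\partial\mathcal{V}_{pl} = -\tfrac{1}{2}\{\mathcal{V}_{pl},\mathcal{V}_{pl}\} - \Delta_o^{(1)}\mathcal{V}_{pl}$ follows by the same combinatorics as in~\cite{Cho:2019anu}, with the $\Delta_c$ and $\Delta_o^{(2)}$ terms simply omitted.
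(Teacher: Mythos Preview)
Your proposal has a genuine gap in the treatment of $\{\cdot,\cdot\}_c$. You assert that the constraint $L_c\leq L_\ast$ enters Cho's argument \emph{only} when controlling $\Delta_c$, but the paper states explicitly that in~\cite{Cho:2019anu} this constraint is used for \emph{both} $\Delta_c\mathcal{V}$ and $\{\mathcal{V},\mathcal{V}\}_c$. Dropping $\Delta_c$ is indeed free in the planar setting, but $\{\cdot,\cdot\}_c$ survives in~\eqref{eq:PlanarBV}, so you still owe an argument that twist-sewing two $c$-borders of length $L_c$ on disjoint planar surfaces does not create a non-contractible closed geodesic shorter than $L_c$. Your appeal to the generalized collar lemma with $\sinh L_c\sinh L_o\leq 1$ does not do this job: that inequality governs the width of collars around closed geodesics relative to $p$-side strips, i.e.\ it prevents short \emph{$p$-geodesics} from appearing near the sewed $c$-border. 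It says nothing about short \emph{closed} geodesics crossing the sewing curve; controlling those via the collar lemma is precisely what forces $\sinh(L_c/2)\sinh L_c\leq 1$, i.e.\ $L_c\leq L_\ast$.

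The paper closes this gap by importing the Costello--Zwiebach ``cut and slide'' argument from~\cite{Costello:2019fuh}: if a simple closed geodesic $\delta$ of length $<L_c$ on $\Sigma_1\sharp\Sigma_2$ crosses the sewed $c$-border $\gamma$, one splits $\gamma=\gamma_1\sqcup\gamma_2$ and $\delta=\delta_1\sqcup\delta_2$ at the intersection points, picks the short halves $\gamma_1\leq L_c/2$ and $\delta_1<L_c/2$, and slides the loop $\gamma_1\sqcup\delta_1$ into one of the $\Sigma_i$, contradicting $\text{sys}(\Sigma_i)\geq L_c$. This replacement argument uses no collar estimate and hence no bound on $L_c$; it is the essential new step that your outline is missing.
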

\begin{proof}
Almost all of the proof proceeds as in~\cite{Cho:2019anu}. We just notice that the constraint $L_c \leq L_\ast$ is coming from the proof of the containment of $\{\mathcal{V}, \mathcal{V}\}_c$ and $\Delta_c \mathcal{V}$ in $\partial \mathcal{V}$. The latter is irrelevant for us, while the former can be modified using the argument in~\cite{Costello:2019fuh} as follows. We want to make sure that when $c$-borders of two disjoint surfaces $\Sigma_1, \Sigma_2 \in  \mathcal{V}_{pl}(L_o,L_c)$ are twist-sewed we don't create any non-contractible closed geodesic shorter than $L_c$ in the combined surface $\Sigma_1 \sharp \Sigma_2$. We can't create any $p$-geodesic shorter than $L_o$ already, given that~\eqref{eq:Constraint} is satisfied, see~\cite{Cho:2019anu}. Since $\text{sys}(\Sigma_1), \text{sys}(\Sigma_1) \geq L_c$, we only need to worry about non-contractible closed geodesics shorter than $L_c$ that cross~\textit{both} surfaces $\Sigma_1, \Sigma_2$.

So, denote the sewed $c$-border by $\gamma$, which is a simple closed geodesic with length $L_c$, and suppose there is a simple closed geodesic $\delta$ with $\delta \subset \Sigma_1 \sharp \Sigma_2$ that transverses $\Sigma_1 \sharp \Sigma_2$ whose length is shorter than $L_c$ for the sake of contraction, i.e. we have
\begin{align}
	\gamma = L_c \quad \quad
	\text{and} \quad \quad
	\delta < L_c \, ,
\end{align}
Here we used the same letters for geodesics and their lengths. Note that it is sufficient to consider a~\textit{simple} geodesic $\delta$: if it were self-intersecting we can always create shorter non-intersecting curves by performing the operation shown in figure~\ref{fig:proof} at intersections and consider a simple shorter geodesic homotopic to one of the resulting simple curves.
\begin{figure}
	\centering
	\includegraphics[width=\textwidth,height=5cm]{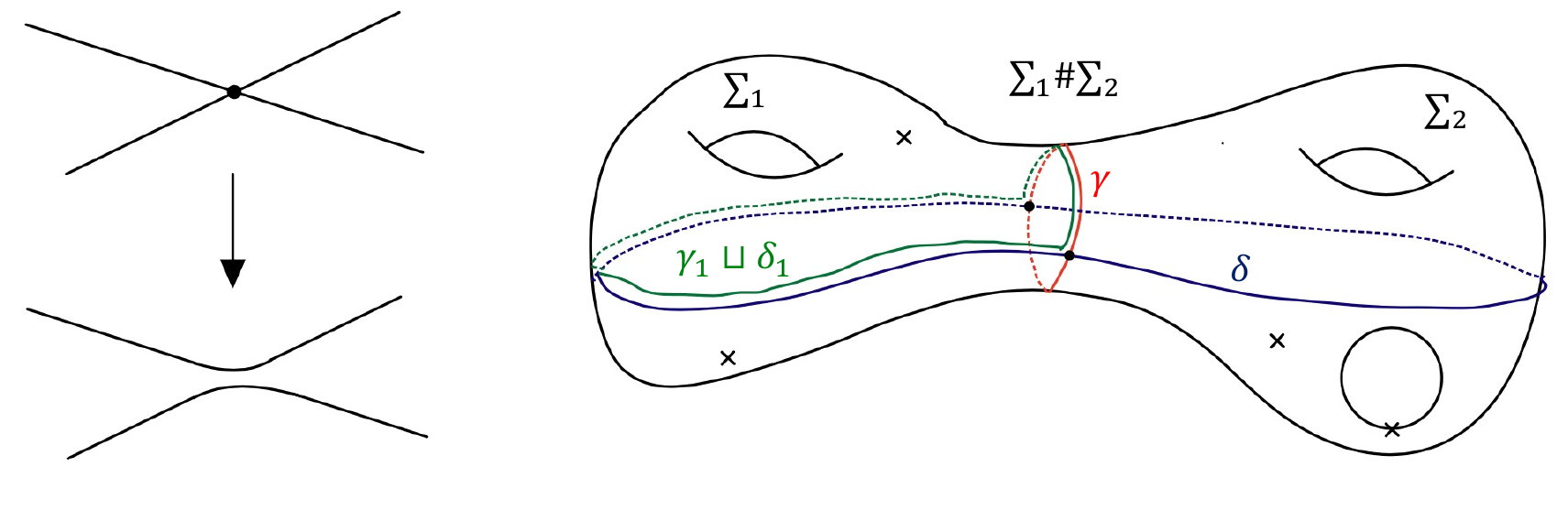}
	\caption{\label{fig:proof}The operation eliminating self-intersections~(\textit{right}) and a pictorial representation of the configuration in the proof~(\textit{left}).}
\end{figure}

Observe that $\gamma$ divides $\delta$ by two and vice-versa, see figure~\ref{fig:proof}. We denote these two components by $\gamma_1, \gamma_2$ and $\delta_1, \delta_2$ respectively, so we have
\begin{align}
	\gamma = \gamma_1 \sqcup \gamma_2 \quad \quad
	\text{and} \quad \quad
	\delta = \delta_1 \sqcup \delta_2 \, .
\end{align}
One of the components of both $\gamma$ and $\delta$ has to be smaller than $L_c/2$. Say
\begin{align}
	\gamma_1 \leq {L_c \over 2} \quad \quad
	\text{and} \quad \quad
	\delta_1 < {L_c \over 2} \, ,
\end{align}
without loss of generality. This implies the length of the curve $\gamma_1 \sqcup \delta_1$  has to be smaller than $L_c$. But this curve can be ``slided'' to one of the surfaces, which leads to a contradiction with the assumption $\Sigma_1, \Sigma_2 \in  \mathcal{V}_{pl}(L_o,L_c)$ as it gives rise to a non-contractible simple closed geodesic whose length is smaller than the systole of the surface, $L_c$. This sliding produces a~\textit{non-contractible} simple closed geodesic: if this weren't the case $\delta$ would have been homotopic to a closed geodesic in a single surface and it would have showed $\delta$ wasn't a geodesic to begin with. Thus we conclude no short non-contractible simple closed geodesic can be created upon sewing. Since $\gamma = L_c$ and $\Sigma_1 \sharp \Sigma_2 \in \partial \mathcal{V}_{pl}(L_o,L_c)$, the planar string vertices $\mathcal{V}_{pl}(L_o,L_c)$ solve the geometric planar master equation~\eqref{eq:PlanarBV} as a result.
\end{proof}

We remind that the Feynman diagrams constructed using $\mathcal{V}_{pl}(L_o,L_c)$ cover the relevant moduli space once and only once, in the sense that they represent a fundamental class in the homology relative to the boundary when pushed-forward to the moduli space, similar to the general open-closed vertices. This still holds true thanks to Theorem 1 in~\cite{Cho:2019anu}. However, we don't know such Feynman regions define pieces of section over $\widehat{\mathcal{P}}^{0,n}_{b, \{m_i\}} \to \mathcal{M}^{0,n}_{b, \{m_i\}}$ in the case of hyperbolic vertices.

Before closing this subsection we remark on a minor modification to vertices of~\cite{Cho:2019anu}. In~\eqref{eq:HyperbolicVertices}, $\widetilde{\mathcal{V}}^{0,1}_{1, 0} (L_c)$ is defined to be a flat circle of circumference $L_c$. However, it is also possible to define this as a flat cylinder of circumference $L_c$ and length $s$. This change requires grafting a flat finite-sized cylinder (i.e. a \textit{stub}) of length $s$ to $b$-borders.\footnote{Refer~\cite{Chiaffrino:2021uyd,Schnabl:2023dbv,Erbin:2023hcs,upcoming_work} for recent treatises on stubs.} This modification turns out to be quite useful: it makes sure the divergences associated with the boundary state on $b-$borders is tamed.

The chain $\p \mathcal{V}$ still contains surfaces whose $b$-borders have length $L_c$ after this modification---adding stubs doesn't affect anything here. It can be shown that this part of $\p \mathcal{V}$ is still contained in the right-hand side of the master equation~\eqref{eq:BV}, given these surfaces can be obtained by $\{ \mathcal{V}^{0,1}_{1, 0}(L_c), \chi \}_c$ where $\chi$ is the same surface for which $b$-border, and its stub, is replaced with a $c$-border. Sewing new $\mathcal{V}^{0,1}_{1, 0}(L_c)$ to $\chi$ then generates the $b$-border and its stub. In the converse direction, sewing $\mathcal{V}^{0,1}_{1, 0}(L_c)$ to any other surface doesn't create a short $p$-geodesics like in~\cite{Cho:2019anu}. So these surfaces are in $\p \mathcal{V}$. From this moment forward we assume stubs of length $s$ always exist around the $b$-borders and they are included implicitly when a $b$-border has created upon sewing.

\subsection{The $L_o \to \infty, L_c \to 0$ limit}

Now we consider singular limits of planar hyperbolic string vertices obeying the constraint~\eqref{eq:Constraint}. Let us begin by making a comment on what happens when the border length associated with closed (open) strings approaches to $0$. We claim this limit is only meaningful if closed (open) strings are taken on-shell: it is no longer possible to associate local coordinates around the relevant punctures as grafted flat cylinders (strips) disappear in this limit. The only way to consistently insert bulk (boundary) vertex operators is through inserting ones for which the off-shell data is irrelevant, i.e. by taking them on-shell. By similar reasons, there can't be any Feynman diagrams with closed (open) string propagators in this limit. So taking $L_c \to 0$ ($L_o \to 0$) in hyperbolic vertices integrates out closed (open) string fields.\footnote{This statement is true up to boundary contributions of the moduli space corresponding to on-shell states produced inside the diagrams. We ignore them here, however they can be tracked systematically using the approach below.}

That being said, we now consider the limit
\begin{align} \label{eq:EllwoodLimit}
	L_o \to \infty, \quad L_c \to 0, \quad ds \to {\pi \over L_o} ds, \quad \sinh L_o \, \sinh L_c \leq 1\, .
\end{align}
We call this~\textit{the Ellwood limit} for reasons that would be apparent soon. Here $ds$ stands for the partially flat, partially hyperbolic metric of hyperbolic vertices and the third limit corresponds to infinite scaling one has to do after taking $L_o \to \infty$, see~\cite{Costello:2019fuh, Firat:2023glo}. Observe that we keep satisfying the geometric master equation~\eqref{eq:BV} as the limit is taken by the last condition.

We begin by investigating the limits of dimension-zero vertices, refer to appendix~\ref{app} for details. We have four zero-dimensional vertices in general. However $\mathcal{V}^{0,3}_{0,0}(L_c)$ and $\mathcal{V}^{0,1}_{1,0}(L_c)$ are not relevant as we are taking closed string on-shell, so we only have to consider $\mathcal{V}^{0,0}_{1,3}(L_o)$ and $\mathcal{V}^{0,1}_{1,1}(L_o, L_c)$. Recall that the closed string insertion in $\mathcal{V}^{0,1}_{1,1}(L_o, L_c)$ has to be on-shell. We claim $\mathcal{V}^{0,0}_{1,3}(L_o \to \infty)$ and $\mathcal{V}^{0,1}_{1,1}(L_o\to\infty, L_c\to 0)$ are Witten's vertex and the vertex associated with the Ellwood invariant, which we call~\textit{Ellwood vertex}, respectively. The local coordinates for these vertices are plotted in figure~\ref{fig:Ellwood}. The explicit expressions for the local coordinates can be found in~\cite{Ellwood:2008jh,rastelli2001tachyon}.
\begin{figure}[t]
	\includegraphics[width=0.50\textwidth,height=0.48\textwidth]{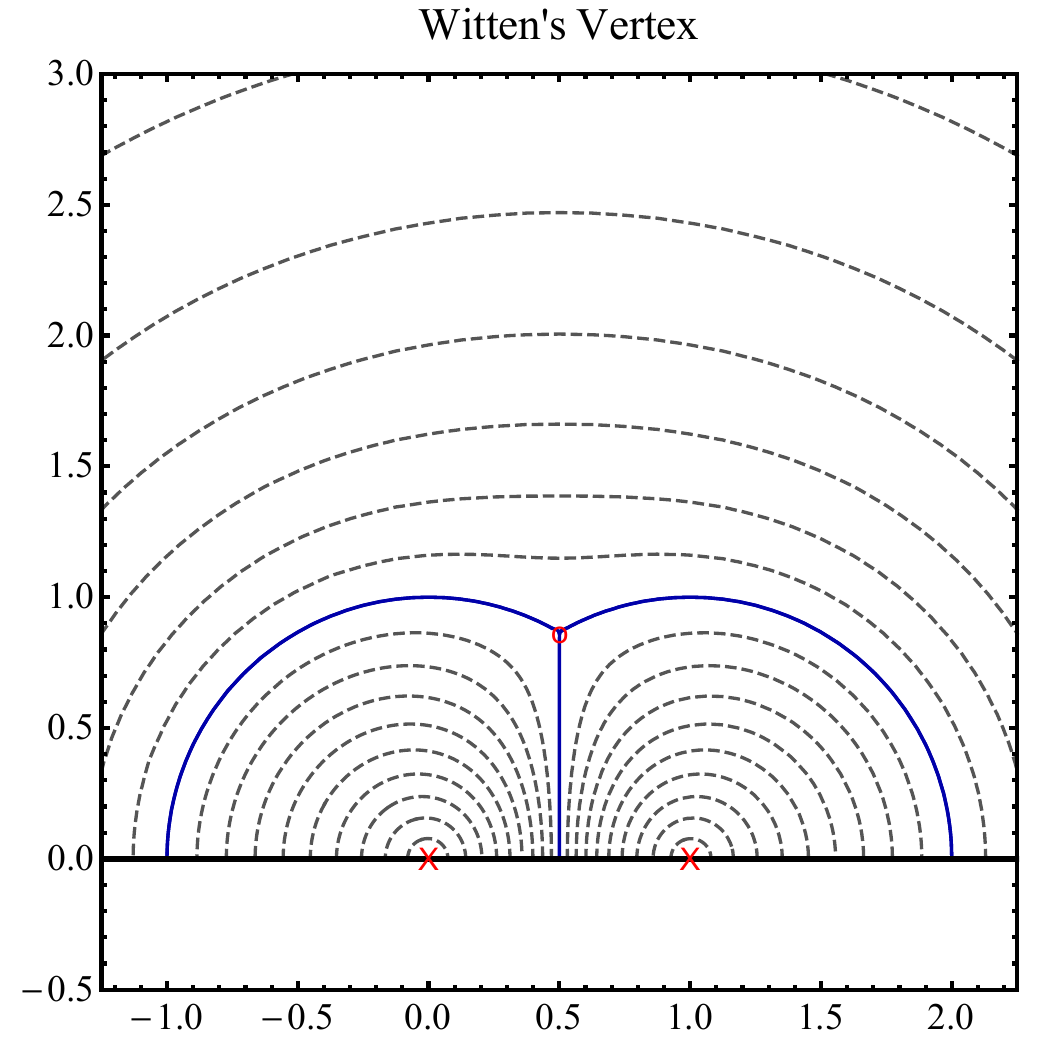}
	\includegraphics[width=0.50\textwidth,height=0.48\textwidth]{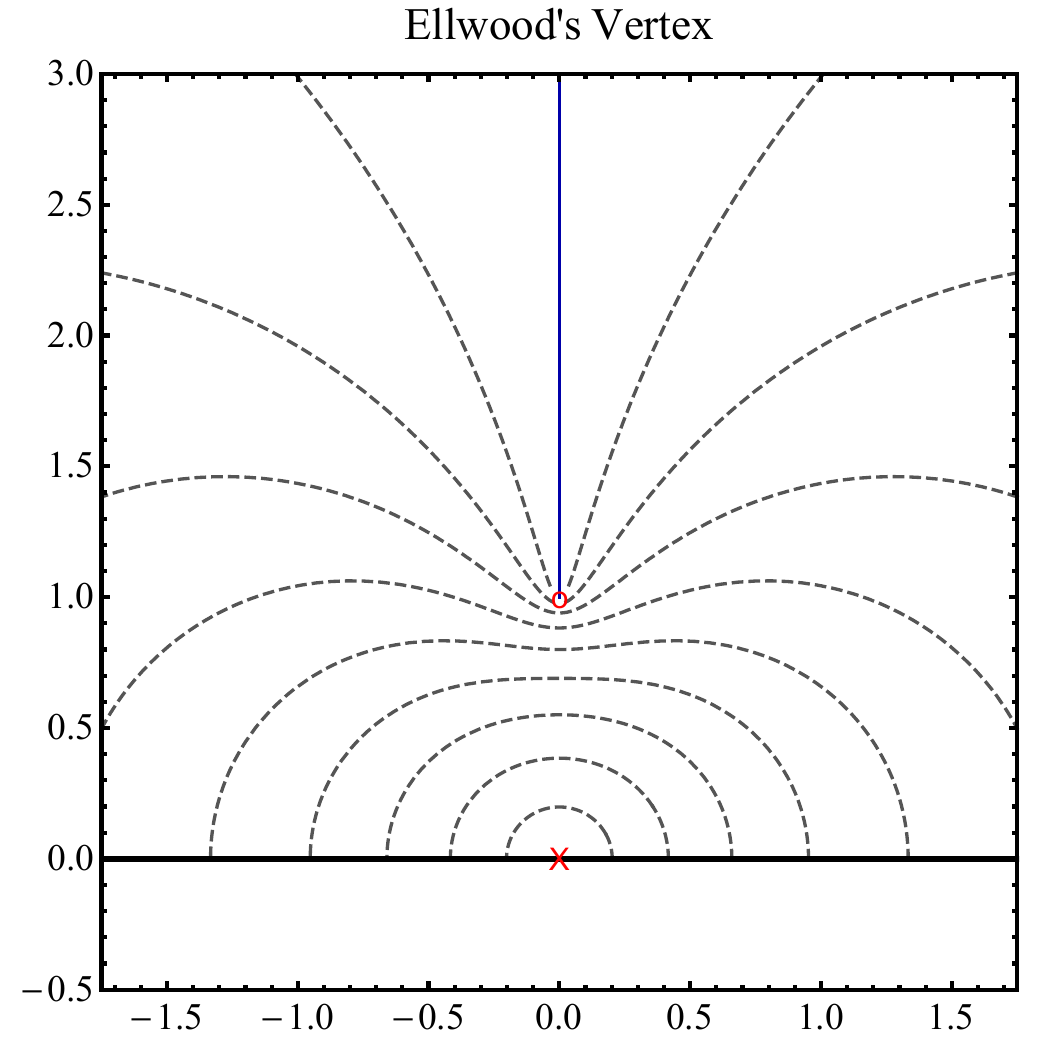}
	\caption{\label{fig:Ellwood}Witten's and Ellwood's vertices. We map contours $|w_i| = 0.1, \cdots, 1$ of the local coordinates (dashed lines) to the UHP $z$. These are the horizontal trajectories of the associated differentials~\eqref{eq:Ellwood}. The critical graph is the solid blue curve, the critical points (zeros and simple poles) are marked with red points, and the double poles are marked with a red cross. }
\end{figure}

In order to prove this claim, recall that there are Strebel differentials associated with Witten's and Ellwood's vertex. These are given by, eg. by using their local coordinates~\cite{Ellwood:2008jh,rastelli2001tachyon},
\begin{subequations} \label{eq:Ellwood}
	\begin{align}
	&\varphi_{W}= - \left[ {1 \over z^2} + {1 \over (1-z)^2} + {1 \over z} +{ -1 \over z-1}
	\right] dz^2 
	= - {z^2 - z + 1 \over z^2(1-z)^2} dz^2
	\,  , \\
	& \varphi_{E}   = -
	\left[
	{1 \over z^2} + {i \over 2(z-i)} - {i \over 2(z+i)} \right]dz^2 
	= - {1 \over z^2 (z^2+1)} dz^2 \, ,
	\end{align}
\end{subequations}
on the UHP $z$. These differentials contain entire information regarding the local coordinates around punctures. So the vertices $\mathcal{V}^{0,0}_{1,3}(L_o)$ and $\mathcal{V}^{0,1}_{1,1}(L_o,L_c)$ defined according to~\eqref{eq:HyperbolicVertices} can be shown to reduce Witten's and Ellwood's vertex by establishing the existence of Strebel differentials~\eqref{eq:Ellwood} underlying their local coordinates in the Ellwood limit~\eqref{eq:EllwoodLimit}.

This can be done by taking the WKB limit of the Fuchsian equation~\eqref{eq:Fuchsian} with the choices for $T(z)$ given as in~\eqref{eq:WKB}. Note that the WKB limit refers to the Ellwood limit~\eqref{eq:EllwoodLimit} in this context. As we highlighted earlier, the entire off-shell data regarding $\mathcal{V}^{0,0}_{1,3}(L_o)$ and $\mathcal{V}^{0,1}_{1,1}(L_o,L_c)$ are contained within the Fuchsian equation. Furthermore, as argued in~\cite{Firat:2023glo}, stress-energy tensors given in~\eqref{eq:WKB} result in the following Strebel differentials
\begin{subequations}
\begin{align}
	&-2 \pi^2 \lim_{\substack{L_o \to \infty \\ L_c \to 0}} {T(z; \mathcal{V}^{0,0}_{1,3}(L_{o})) \over L_o^2} dz^2 = \varphi_{W} \, ,\quad  -2 \pi^2 \lim_{\substack{L_o \to \infty \\ L_c \to 0}} {T(z; \mathcal{V}^{0,1}_{1,1}(L_o,L_c)) \over L_o^2} dz^2  = \varphi_{E} \, ,
\end{align}
\end{subequations}
associated with their local coordinates as $L_o \to \infty, L_c \to 0$. These are the same as those given in~\eqref{eq:Ellwood}, which establishes our claim. We point out having Strebel differentials makes sense in the Ellwood limit. Hyperbolic parts and $c$-borders shrink as $L_o \to \infty, L_c \to 0$ by the Gauss-Bonnet theorem and these produce a ``ribbon-like graph'' described by the critical graphs of Strebel differentials~\cite{Firat:2023glo}. This is apparent in figure~\ref{fig:Ellwood}: hyperbolic parts and $c$-borders collapses to the solid blue curves and all that remains are the grafted semi-infinite flat strips. We rescale the lengths by $\pi / L_o$, which is equivalent to measuring the lengths in the metrics defined by the Strebel differentials $\sqrt{|\varphi|}$. The lengths of the $p$-sides are equal to $\pi$ in this metric.

In fact, the Ellwood limit of hyperbolic open-closed SFT corresponds to the Witten's open SFT deformed by the Ellwood invariant~\cite{zwiebach1992interpolating,Ellwood:2008jh}. Witten's and Ellwood's vertex alone are known to cover~\textit{all} moduli spaces relevant to open-closed SFT with open string Feynman diagrams at the expense of taking closed strings on-shell~\cite{zwiebach1992interpolating}.  It is an instructive exercise to establish this fact purely within hyperbolic SFT. Like argued in~\cite{zwiebach1992interpolating}, it is sufficient to show dimension-one vertices are empty in the limit $L_o \to \infty, L_c \to 0$ to show all relevant moduli spaces are covered with open string Feynman diagrams constructed using Witten's and Ellwood's vertices. This is because open string propagators increase the dimension of vertices by one and the geometric master equation~\eqref{eq:BV} is assumed to keep being satisfied as the limit is taken, which is only possible as long as all higher vertices disappear if dimension-one vertices disappear in the process. This leads to the mentioned covering.

Thus we only have to show dimension-one vertices are absent. Decompositions of moduli spaces of these vertices are characterized in~\cite{Cho:2019anu} for generic values of $L_c,L_o$, see appendix~\ref{app} for summary. All we have to do at this point is then to take the Ellwood limit of the inequalities describing the vertex regions of dimension-one vertices while satisfying the geometric master equation. Case-by-case we have:
\begin{enumerate}
	\item \underline{Disk with four boundary punctures $\mathcal{V}^{0,0}_{1,4} (L_o \to \infty)$}. The inequality describing (a single piece of) vertex region takes the form of
	\begin{align}
		\pi = \pi \lim_{\substack{L_o \to \infty \\ L_c \to 0}} {L_o \over L_o} \leq \ell \leq
		\pi \lim_{\substack{L_o \to \infty \\ L_c \to 0}} { e(L_o) \over L_o} = \pi \, ,
	\end{align}
	using~\eqref{eq:A1}. Here, and in the subsequent analysis, $\ell$ denotes the length of the $p$-geodesic measured in the Strebel metric $ds = \sqrt{|\varphi|}$. We see $\mathcal{V}^{0,0}_{1,4} (L_o \to \infty) = \emptyset$ as a result, which makes sense: it is a well-known fact that Feynman diagrams constructed using Witten's vertex cover the entire moduli space $\mathcal{M}^{0,0}_{1,4}$~\cite{Zwiebach:1990az}.
	
	\item \underline{Disk with one bulk and two boundary punctures $\mathcal{V}^{0,1}_{1,2}(L_o \to \infty, L_c \to 0)$}. The inequality describing this vertex region takes the form of
	\begin{align}
	\pi = \pi \lim_{\substack{L_o \to \infty \\ L_c \to 0}} {L_o \over L_o} \leq \ell \leq
	\pi \lim_{\substack{L_o \to \infty \\ L_c \to 0}} { f(L_o, L_c) \over L_o} = \pi \, ,
	\end{align}
	using~\eqref{eq:A2}. Again, we see the vertex is absent $\mathcal{V}^{0,1}_{1,2} (L_o \to \infty ,L_c\to 0) = \emptyset$. For an alternative argument based on minimal-area metrics, see~\cite{zwiebach1992interpolating}.

	\item \underline{Disk with two bulk punctures $\mathcal{V}^{0,2}_{1,0}(L_o \to \infty,L_c \to 0)$}. Unlike the cases above here we find
	\begin{align} \label{eq:Amb}
	\pi \lim_{\substack{L_o \to \infty \\ L_c \to 0}} { g(L_c) \over L_o}  \geq 2 \pi \, ,
	\end{align}
	using~\eqref{eq:A3} for the upper-bound of the vertex region. Having an inequality means that this limit is ambigious and it depends on the relation between $L_c$ and $L_o$ as the Ellwood limit~\eqref{eq:EllwoodLimit} is taken. The inequality is saturated when $\sinh L_c \sinh L_o =1$. We expect that the Ellwood limit to be unique on the other hand (see figure~\ref{fig:Constraint}). So the only way to be consistent with this expectation is by interpreting this ambiguity as having no surfaces beyond $\ell > \pi$, i.e. $\ell = \pi$ to be the boundary of the moduli space. Under this interpretation this vertex is absent $\mathcal{V}^{0,2}_{1,0}(L_o \to \infty, L_c\to 0) = \emptyset$.
	
	This interpretation is justified when we recall the argument by Zwiebach~\cite{zwiebach1992interpolating}. As we have argued above, the vertex $\mathcal{V}^{0,1}_{1,1}(L_o \to \infty,L_c\to 0)$ is the Ellwood vertex. We can construct a Feynman diagram to describe a disk with two bulk punctures by identifying two Ellwood vertices by their boundary punctures. This is shown in figure~\ref{fig:sing}. As the open string propagator collapses ($s \to 0$) two bulk punctures approach each other. This corresponds to surface being singular and lying on the boundary of moduli space. From hyperbolic perspective $s = 0$ is equivalent to $\ell = \pi$ because any surface with $\ell < \pi$ contains an open string propagator by the $p$-systole condition~\eqref{eq:Cho}. So both of these frameworks suggest interpreting the ambiguity in the limit~\eqref{eq:Amb} as hitting the boundary of the moduli space.
	\begin{figure}[t]
		\centering
		\includegraphics[width=0.8\textwidth,height=0.32\textwidth]{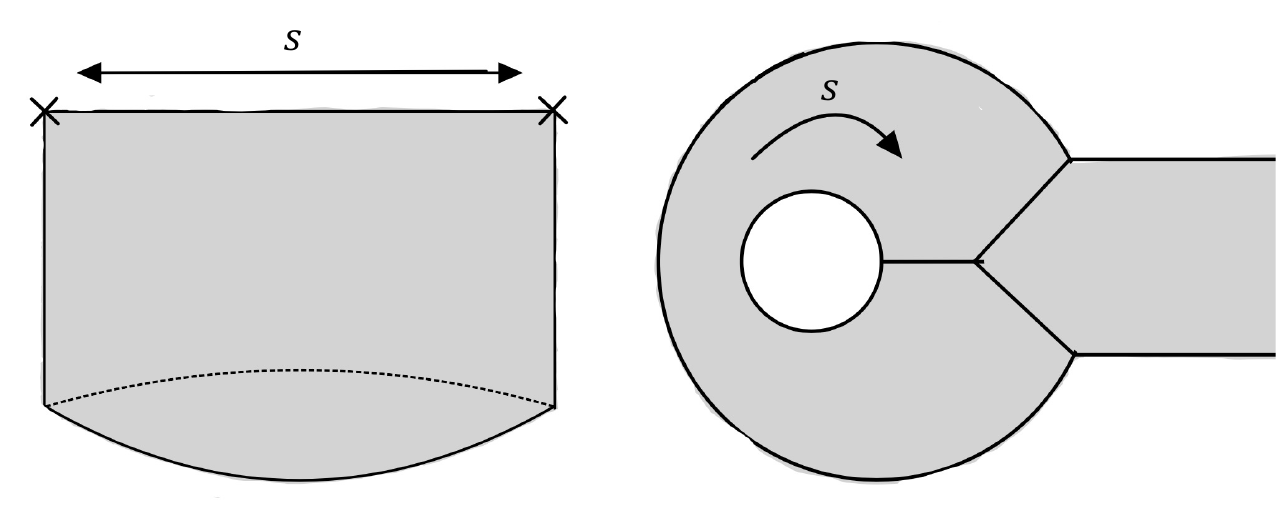}
		\caption{\label{fig:sing}Disk with two on-shell bulk punctures~(\textit{left}) and annulus with one boundary puncture~(\textit{right}) as Feynman diagrams in the Ellwood limit. Surfaces are singular when $s\to 0$.}
	\end{figure}
	
	Looking at surfaces with $p$-geodesic $L \geq L_o$ before taking the Ellwood limit provides another justification. These surfaces don't contain any open string propagator or any part whose length diverges with $L_o$. So they shrink to a point as $L_o \to \infty$, i.e. they become singular. The only way to be consistent is then taking corresponding region in the moduli space to be absent.
	
	\item \underline{Annulus with one boundary puncture $\mathcal{V}^{0,0}_{2,\{1,0\}}(L_o \to \infty, L_c \to 0)$}. Similar to the case above, we also have an ambiguity for the limit of the upper-bound of the vertex region
	\begin{align}
		\pi \lim_{\substack{L_o \to \infty \\ L_c \to 0}} { h(L_o, L_c) \over L_o}  \geq {3\pi \over 2} \, , 
	\end{align}
	by~\eqref{eq:A4}. Again, we interpret this as hitting the boundary of the moduli space at $\ell = \pi$ and $\mathcal{V}^{0,0}_{2,\{1,0\}}(L_o \to \infty,L_c \to 0) = \emptyset$ as a result. We can argue this is consistent with~\cite{zwiebach1992interpolating} like in $\mathcal{V}^{0,2}_{1,0}(L_o \to \infty, L_c \to 0)$, see figure~\ref{fig:sing}. Alternatively, we can consider surfaces with $p$-geodesics $L \geq L_o$ before taking the limit and observe these surfaces become singular in the limit.

\end{enumerate}

Before closing off this subsection we would like to point out a subtlety.\footnote{The author thanks Carlo Maccaferri for emphasizing this point.} As we noted, the consistency of the Ellwood limit requires taking closed strings on-shell. However this is~\text{not} sufficient for the construction here: the state has to be (weight $(0,0)$) primary as well. In order to see why, recall that the level-matched closed string cohomology supports an on-shell non-primary state, the ghost-dilaton $D$, controlling the value of string coupling constant. This is a zero-momentum state, together with its transformation law $w \to f(w)$, given by
\begin{align}
D = {1 \over 2} ( c \p^2 c - \overline{c} \overline{\p}^2 \overline{c} ) \, , \quad \quad
f \circ D = D - {f''(0) \over 2 f'(0)^2} \, c \partial c + {\bar{f}''(0) \over 2 \bar{f}'(0)^2} \, \bar{c} \bar{\partial} \bar{c} \, .
\end{align}
This state leads to an amplitude diverging as $\sim L_o^2$ as $L_o \to \infty$ when inserted into $\mathcal{V}^{0,1}_{1,1}(L_o,L_c)$, inspect equation (1.3) in~\cite{Firat:2021ukc} for intuition. To avoid such divergence we are forced to take closed string insertions primary. Consistent coupling of the ghost-dilaton to the open SFT is an important problem and most likely requires regularizing this divergence while keeping track of the boundary term contributions of the moduli space carefully. We plan to discuss this issue elsewhere.

\subsection{The $L_o \to 0, L_c \to \infty$ limit} \label{sec:MRV}

Now we consider the ``opposite'' limit where open strings are taken on-shell instead:
\begin{align} \label{eq:MRV}
	L_o \to 0 \, , \quad
	L_c \to \infty \, , \quad
	ds \to {2\pi \over L_c} ds \, , \quad
	\sinh L_o \, \sinh L_c \leq 1 \, .
\end{align}
This limit is rather exotic and it describes closed SFT deformed by on-shell open strings ending on $N$ identical D-branes when $N$ is large. Importantly, notice that this situation can only be realized consistently in the large $N$ open-closed SFT by the last condition on the border lengths---not in generic open-closed SFT, see figure~\ref{fig:Constraint}. We denote this limit~\textit{the Maccaferri-Ruffino-Vo\v{s}mera (MRV) limit} after~\cite{Maccaferri:2023gof}.

Our objective in this subsection is to investigate the local coordinates resulting from the MRV limit and the associated covering of the moduli spaces of planar Riemann surfaces. Since we take $ L_c \to \infty$ we produce polyhedral vertices based on Strebel differentials. Simultaneously we take $L_o \to 0$, so the local coordinates around boundary punctures disappear entirely. This requires taking open strings on-shell. In this case every open string boundary becomes a simple closed geodesic of the surface with some markings on it and these can be effectively treated as $b$-borders. Measured in the Strebel metric $\sqrt{|\varphi|}$ the length of a $c$-border is equal to $2\pi$.

In light of this remark, let us investigate the dimension-zero and dimension-one vertices. Since open strings are on-shell we don't need to consider $\mathcal{V}^{0,0}_{1,3}(L_o)$. The only relevant vertices for this dimension are $\mathcal{V}^{0,3}_{0,0}(L_c)$, $\mathcal{V}^{0,1}_{1,0}(L_c)$, $\mathcal{V}^{0,1}_{1,1}(L_o,L_c)$. In the MRV limit, the local coordinates for $\mathcal{V}^{0,3}_{0,0}(L_c\to \infty)$ are given by the closed string version of Witten's vertex~\cite{Costello:2019fuh,Firat:2021ukc} while the local coordinates for $\mathcal{V}^{0,1}_{1,1}(L_o\to 0,L_c\to\infty)$ is simply given by~\textit{the Cayley map}
\begin{align} \label{eq:Cayley}
w(z) = {z-i \over z+i} \implies \varphi_C = - {dw^2 \over w^2} = {4 \, dz^2 \over (1+z^2)^2} \, ,
\end{align}
that maps punctured unit disk $0 < |w| \leq 1$ to the UHP $z$ (up to an overall phase ambiguity), see figure~\ref{fig:Deformation}. Having this coordinate is a simple consequence of having no grafted flat strip around the boundary puncture. 
\begin{figure}[t]
	\includegraphics[width=0.50\textwidth,height=0.48\textwidth]{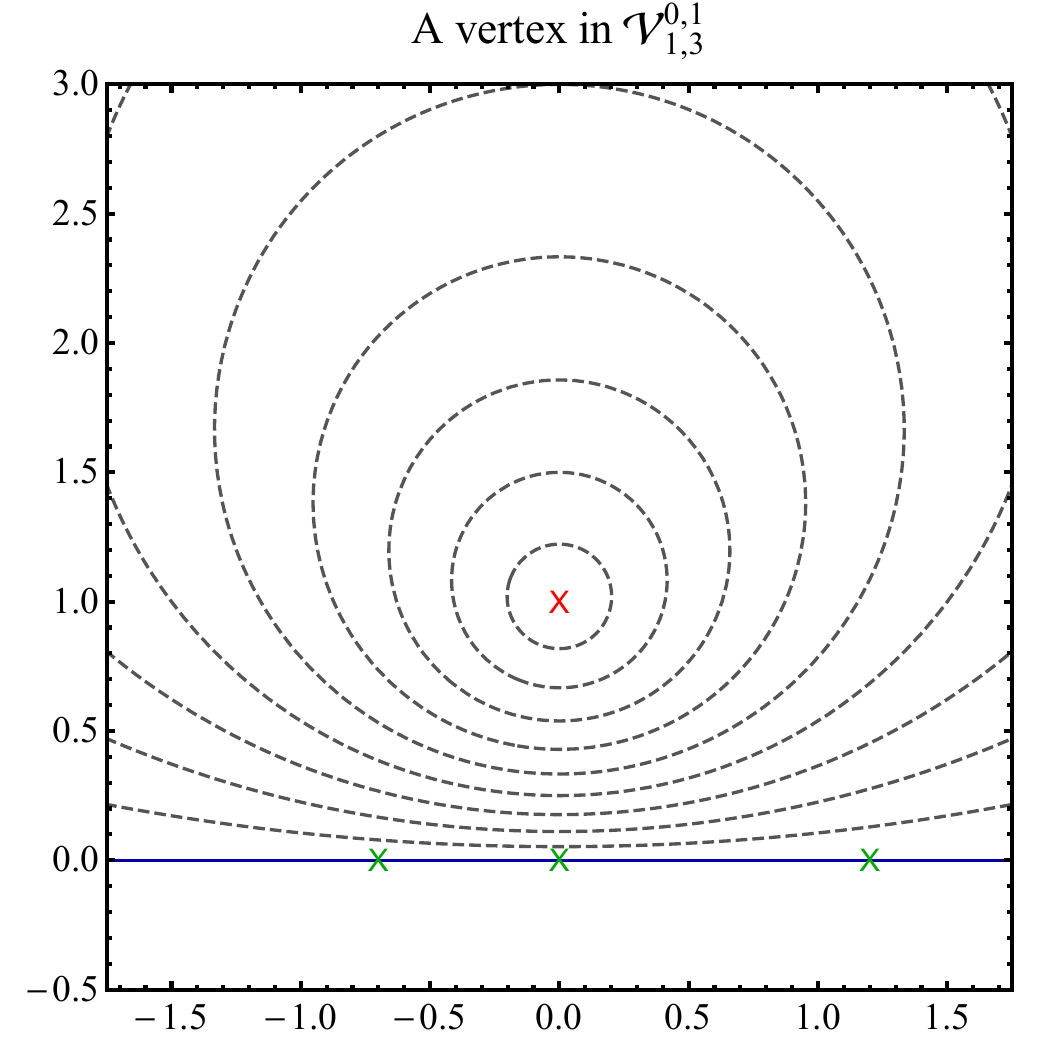}
	\includegraphics[width=0.50\textwidth,height=0.48\textwidth]{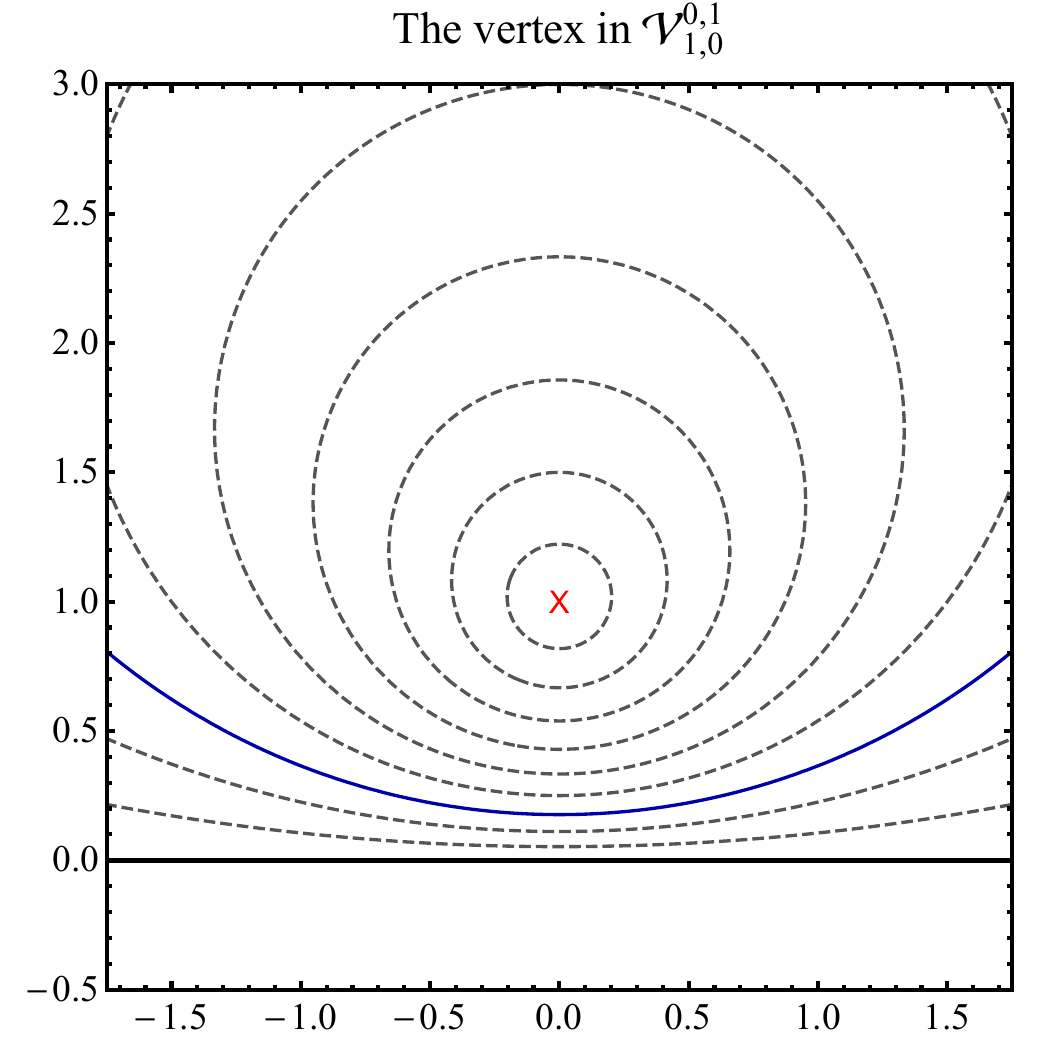}
	\caption{\label{fig:Deformation}Examples of vertices $\mathcal{V}^{0,1}_{1,m}(L_o \to 0,L_c \to \infty)$. The conventions from earlier figure apply. We distinguish the on-shell boundary punctures by marking them with green cross. Notice for the vertex $\mathcal{V}_{1,0}^{0,1}(L_c\to \infty)$ we have taken the flat stub (of length $s = -\log 0.7 \approx 0.36$) into account: $|w|=1$ maps to the blue solid curve.}
\end{figure}
By the same logic, the Cayley map also gives the local coordinates for $\mathcal{V}^{0,1}_{1,m}(L_o\to 0,L_c\to\infty)$---except for $m=0$ where the stub around the $b-$border has to taken account by scaling $w(z) \to e^{-s} w(z)$ additionally. Notice the moduli for these vertices are just the relative positions of boundary punctures and the associated Strebel differential is given by~\eqref{eq:Cayley}.

We point out the Strebel differential $\varphi_C$~\eqref{eq:Cayley} matches with the one coming from the WKB limit of Fuchsian equation with the stress-energy tensor~\eqref{eq:DiskT}
\begin{align}
	-4 \pi^2 \lim_{\substack{L_o \to 0 \\ L_c \to \infty}} {T(z; \mathcal{V}^{0,1}_{1,m}(L_{o},L_{c})) \over L_c^2} dz^2 = \varphi_{C} \, ,
\end{align}
where the WKB limit refers to the MRV limit~\eqref{eq:MRV} in this context. Here a crucial observation behind this derivation is that the accessory parameters $c_j^R$ don't scale with $L_c$, as we don't want them to overwhelm the double poles around boundary punctures while the limit is taken~\cite{Firat:2023glo}. This shows the local coordinates around the bulk puncture are indeed given by the Cayley map~\eqref{eq:Cayley}. 

\begin{figure}[t]
	\includegraphics[width=0.50\textwidth,height=0.463\textwidth]{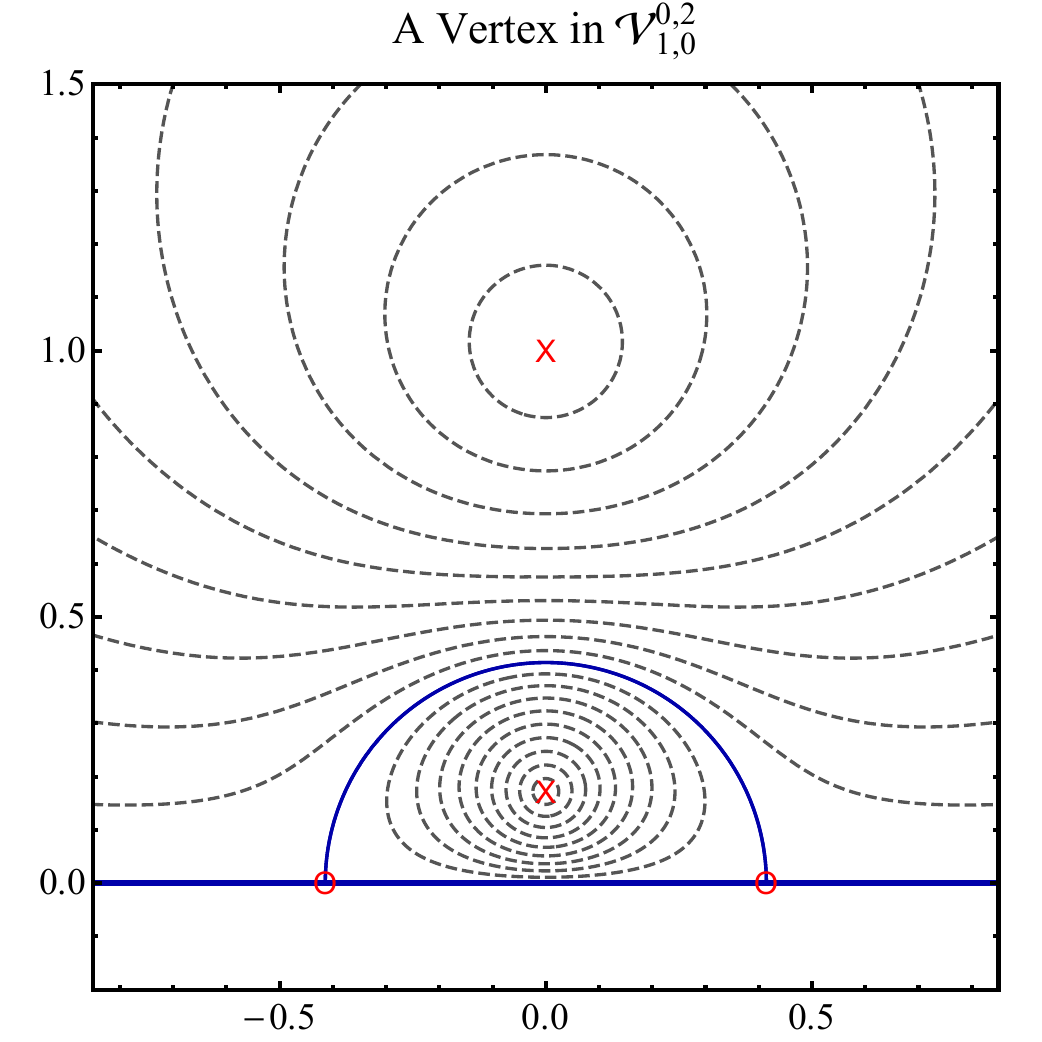}
	\includegraphics[width=0.50\textwidth,height=0.46\textwidth]{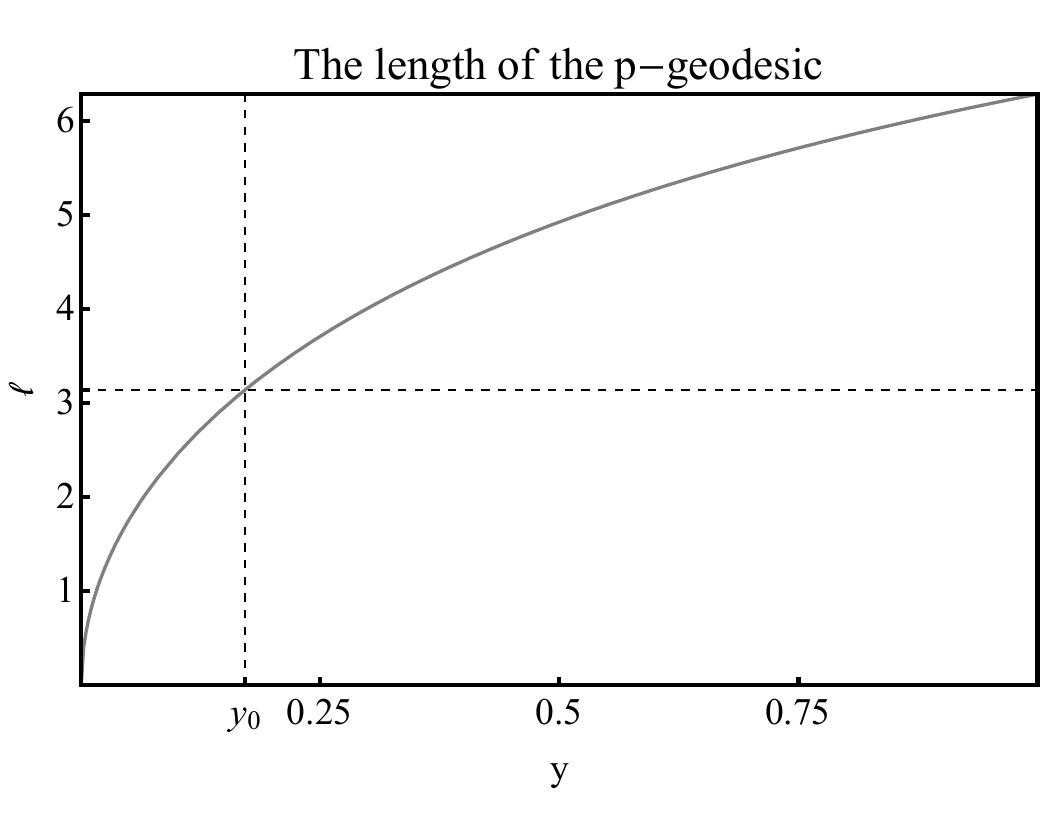}
	\caption{\label{fig:Deformationc}An example of a vertex $\mathcal{V}^{0,2}_{1,0}(L_o \to 0,L_c \to \infty)$~(\textit{left}) and the dependence of the length of $p$-geodesic (blue half-circle) $\ell$ measured in the Strebel metric to the position of the puncture~(\textit{right}). The conventions from earlier figure apply. The punctures are placed at $z=i, i (3-2\sqrt{2} )$. The vertex region is given by $0 \leq \ell \leq \pi$, or equivalently $0 \leq y \leq y_0 = 3-2\sqrt{2}$. }
\end{figure}
Next, we investigate dimension-one vertices relevant for the MRV limit case-by-case:
\newline

\begin{enumerate}
	\item \underline{Disk with one bulk and two boundary punctures $\mathcal{V}^{0,1}_{1,2}(L_o\to0,L_c\to\infty)$}. We find an ambiguity in the limit
	\begin{align}
		2 \pi \lim_{\substack{L_o \to 0 \\ L_c \to \infty}} { f(L_o, L_c) \over L_c} \geq  10\pi \, .
	\end{align}
	 Assuming the MRV limit is unique, this is an indication of hitting the boundary of the moduli space at the upper-bound of the vertex region parameterized by $\ell$. Observe that the lower-bound of the vertex region is given by $\ell = 0$. So $\mathcal{V}^{0,1}_{1,2}(L_o\to 0,L_c\to \infty)$ has to encapsulate the entire moduli space, $\mathcal{V}^{0,1}_{1,2}(L_o\to 0,L_c\to \infty) = \mathcal{M}^{0,1}_{1,2}$. We can reach the same conclusion by observing there isn't any Feynman diagram with a closed string propagator for this topology. So every surface should be part of the vertex region given open strings are integrated out.
	
	\item \underline{Disk with two bulk punctures $\mathcal{V}^{0,2}_{1,0}(L_o \to 0,L_c \to \infty)$}.  The inequality describing the vertex region for this topology takes the form
	\begin{align}
	0 = 2 \pi \lim_{\substack{L_o \to 0 \\ L_c \to \infty}} {L_o \over L_c} \leq \ell \leq
	2 \pi \lim_{\substack{L_o \to 0 \\ L_c \to \infty}} { g(L_c) \over L_c} = \pi \, ,
	\end{align}
	so we have $\mathcal{V}^{0,2}_{1,0}(L_o \to 0,L_c \to \infty) \neq \emptyset$. This should be a polyhedral vertex on the disk with two bulk punctures as there can't be any internal flat strip in the geometry given open strings are integrated out. This means there is underlying Strebel differentials for the surfaces in this vertex region. This can be shown to be given by
	\begin{align}
		\varphi \left(\mathcal{V}^{0,2}_{1,0}(L_o \to 0,L_c \to \infty)\right) = {4 (y-1)^2 (z^2 -y)^2 \over (z^2+1)^2 (z^2 +y^2)^2} dz^2 \, ,
	\end{align}
	after placing the punctures at $z=i, iy$ for $0 \leq y \leq y_o = 3 - 2 \sqrt{2}$. The expressions for the local coordinates around $z=i, iy$ (up to an overall phase ambiguity) are given by
	\begin{align}
	w_0 (z) = - \left({i \sqrt{y} - 1 \over i \sqrt{y} +1} \right)^2 { (z + i y)(z-i) \over (z-iy) (z+i)} \, , \quad
	w_1 (z) =  - \left({i \sqrt{y} - 1 \over i \sqrt{y} +1} \right)^2 { (z - i y)(z+i) \over (z+iy) (z-i)} \, ,
	\end{align}
	respectively, see figure~\ref{fig:Deformationc}. Refer to appendix~\ref{appB} for derivations.
	
	\item \underline{Annulus with one boundary puncture $\mathcal{V}^{0,0}_{2,\{1,0\}}(L_o \to 0,L_c \to \infty)$}. The inequality describing the vertex region for this topology takes the form
	\begin{align}
		0 = 2 \pi \lim_{\substack{L_o \to 0 \\ L_c \to \infty}} {L_o \over L_c} \leq \ell \leq
		2 \pi \lim_{\substack{L_o \to 0 \\ L_c \to \infty}} { h(L_o, L_c) \over L_c} = 0 \, ,
	\end{align}
	so we have $\mathcal{V}^{0,0}_{2,\{1,0\}}(L_o \to 0,L_c \to \infty) = \emptyset$. The vanishing of this vertex region can be understood as a consequence of infinite scaling, which makes surfaces with $p$-geodesics of length $L_o\leq L \leq h(L_o,L_c)$ before taking limit singular after taking the limit. In fact the same logic can be applied to argue $\mathcal{V}^{0,0}_{b,\{m_i\}}(L_o \to 0,L_c \to \infty) = \emptyset$.
	
\end{enumerate}

Despite $\mathcal{V}^{0,0}_{b,\{m_i\}}(L_o \to 0,L_c \to \infty) = \emptyset$, having $\mathcal{V}^{0,2}_{1,0}(L_o \to 0,L_c \to \infty) \neq \emptyset$ indicates the covering of moduli spaces doesn't truncate as in the Ellwood limit: there are still infinitely-many vertex regions associated with  planar Riemann surfaces with at least one bulk puncture. Maybe this shouldn't be too surprising given that the closed SFT part is already non-polynomial. In any case, we find that the hyperbolic vertices in the MRV limit becomes polyhedral. In other words, their off-shell data are characterized by Strebel differentials.

Let us close this subsection by remarking on two important points. Similar to the ghost-dilaton insertions of the Ellwood limit, taking open strings on-shell is not sufficient for the MRV limit: we also have to take them to be weight-$0$ primary. So inserting Nakanishi-Lautrup field $\p c$, for example, may possibly lead to divergences that require a careful treatment than what is provided here. Secondly, like in~\cite{Maccaferri:2023gof}, we obtain closed SFT with a tadpole after taking the MRV limit. These tadpoles have to eliminated to obtain the back-reacted closed string background. This would take a non-trivial effort, however, as it requires solving closed SFT equation of motions, which is notoriously difficult problem.

\section{Conclusion} \label{sec:conclusion}

In this note we constructed a set of string vertices for open-closed SFT in the large $N$ limit based on hyperbolic geometry by relaxing a condition in~\cite{Cho:2019anu}. In particular we show that there is a remarkably simple geometric picture based on Strebel differentials after taking either open or closed strings on-shell. We argued corresponding limits essentially implement integrating-out open or closed strings in the context of hyperbolic open-closed SFT. 

The open-closed (super-)string field theory in the large $N$ limit, if non-perturbatively well-defined,\footnote{There is an indication this is a quite reasonable ``if'', see~\cite{upcoming_work}.} is expected to play a role in the proof of AdS/CFT correspondence. Therefore it is imperative to understand this theory in various regimes as much as possible. Our parametrization gives a family of vertices that can interpolate between ``bulk'' ($L_c \to \infty, L_o \to 0$) and ``boundary'' ($L_c \to 0, L_o \to \infty$) perspectives seamlessly. We point out that the boundary point of view was suggested to characterize the dual closed string theory by Okawa~\cite{Okawa:2020llq}. However, the opposite direction, i.e. using the bulk point of view to characterize the dual open string theory, hasn't been explored in great detail in the literature.

It is worth mentioning that the interpolation from closed to open strings above corresponds to a particular field redefinition within large $N$ open-closed SFT~\cite{Hata:1993gf}. Assuming SFT is a complete theory of string theory, this field redefinition is expected to map degrees of freedom in the gravity side to those of the gauge theory side. This is intriguing, as it will provide a novel field-theoretical approach to the holographic principle within string theory. However the success of applying this supposed SFT-based framework to AdS/CFT boils down to the amount of theoretical control we can exert on hyperbolic string vertices. Continuing the line of research~\cite{Firat:2021ukc,Firat:2023glo,Firat:2023suh}, the author hopes to make further progress towards this objective in future.

Finally, as we demonstrated explicitly, Strebel differentials make a natural appearance in the ``bulk'' point of view of open-closed SFT in the large $N$ limit. This type of quadratic differentials has been discovered to be relevant using entirely different set of reasoning from gauge/gravity correspondence~\cite{gopakumar2011simplest,gopakumar2005free}. It would be interesting to shed light on their connection.

\section*{Acknowledgments}
The author thanks Harold Erbin, Carlo Maccaferri, and Barton Zwiebach for their insightful comments on the early draft and discussions; and Ted Erler, Daniel Harlow, and Manki Kim for stimulating discussions on related topics. This material is based upon work supported by the U.S. Department of Energy, Office of Science, Office of High Energy Physics of U.S. Department of Energy under grant Contract Number  DE-SC0012567.

\appendix
\section{Low dimensional vertices} \label{app}

\begin{figure}[t]
	\centering\includegraphics[width=0.6\textheight,height=0.29\textheight]{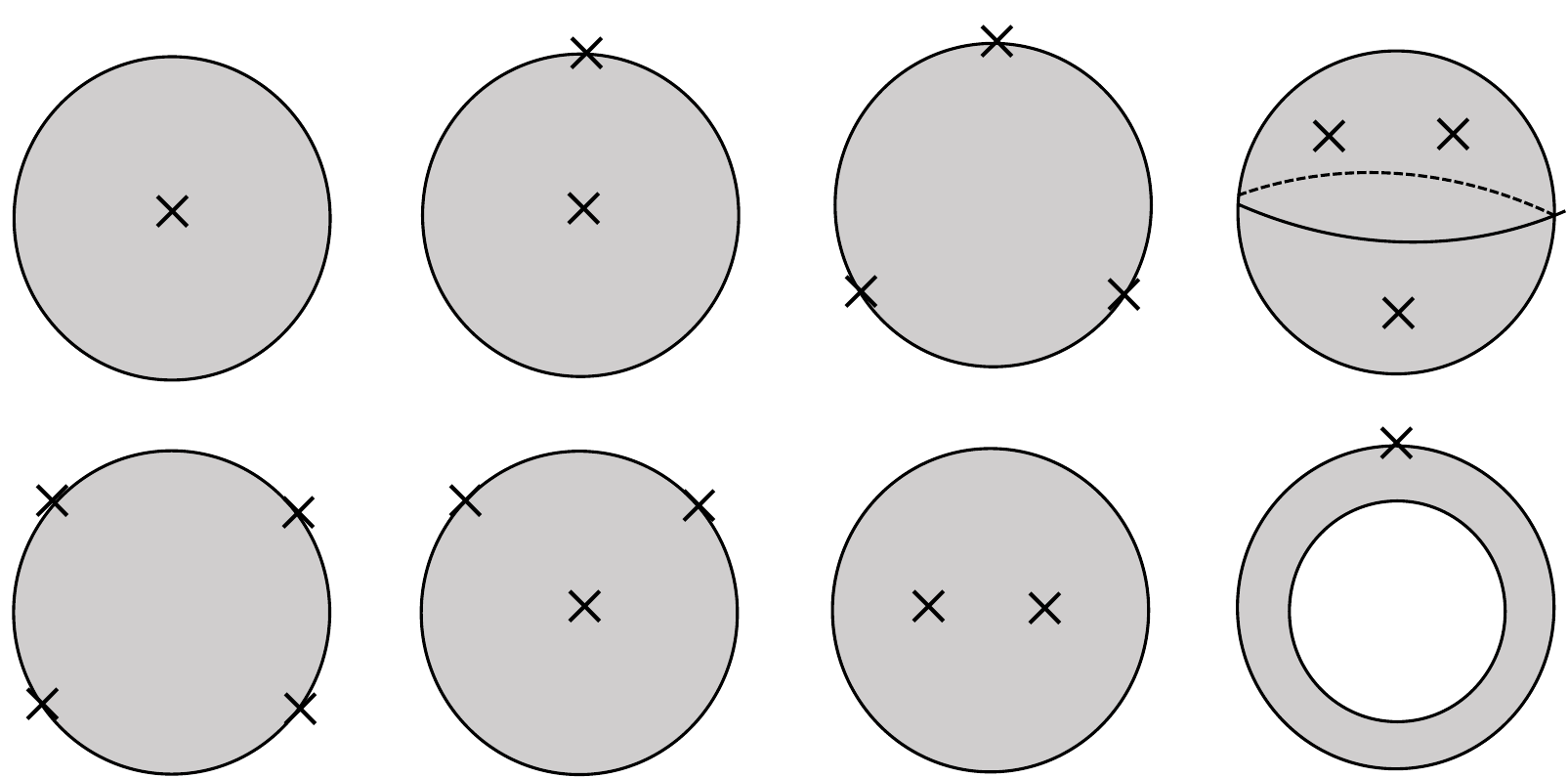}
	\caption{\label{fig:low}Topologies relevant to dimension-zero~(\textit{top}) and dimension-one~(\textit{bottom}) vertices.}
\end{figure}
In this appendix we characterize dimension-one and dimension-zero hyperbolic vertices in terms of the hyperbolic lengths $L$ of one of their $p$-geodesic. This appendix summarizes the results of~\cite{Cho:2019anu} and included for the convenience of the reader. Let us begin by considering dimension-zero vertices shown~in figure~\ref{fig:low}. We have four dimension-zero vertices:
\begin{enumerate}
	\item \underline{Disk with one bulk puncture $\mathcal{V}^{0,1}_{1,0}(L_c)$},
	\item \underline{Disk with one bulk and one boundary punctures $\mathcal{V}^{0,1}_{1,1}(L_o,L_c)$},
	\item \underline{Disk with three boundary punctures $\mathcal{V}^{0,0}_{1,3}(L_o)$},
	\item \underline{Sphere with three bulk punctures $\mathcal{V}^{0,3}_{0,0}(L_c)$}.
\end{enumerate}

Likewise, there are four different dimension-one vertices, which are also shown in figure~\ref{fig:low}. The decompositions associated with these vertices are given as follows:
\begin{figure}
	\centering
	\includegraphics[width=0.7\textheight,height=0.3\textheight]{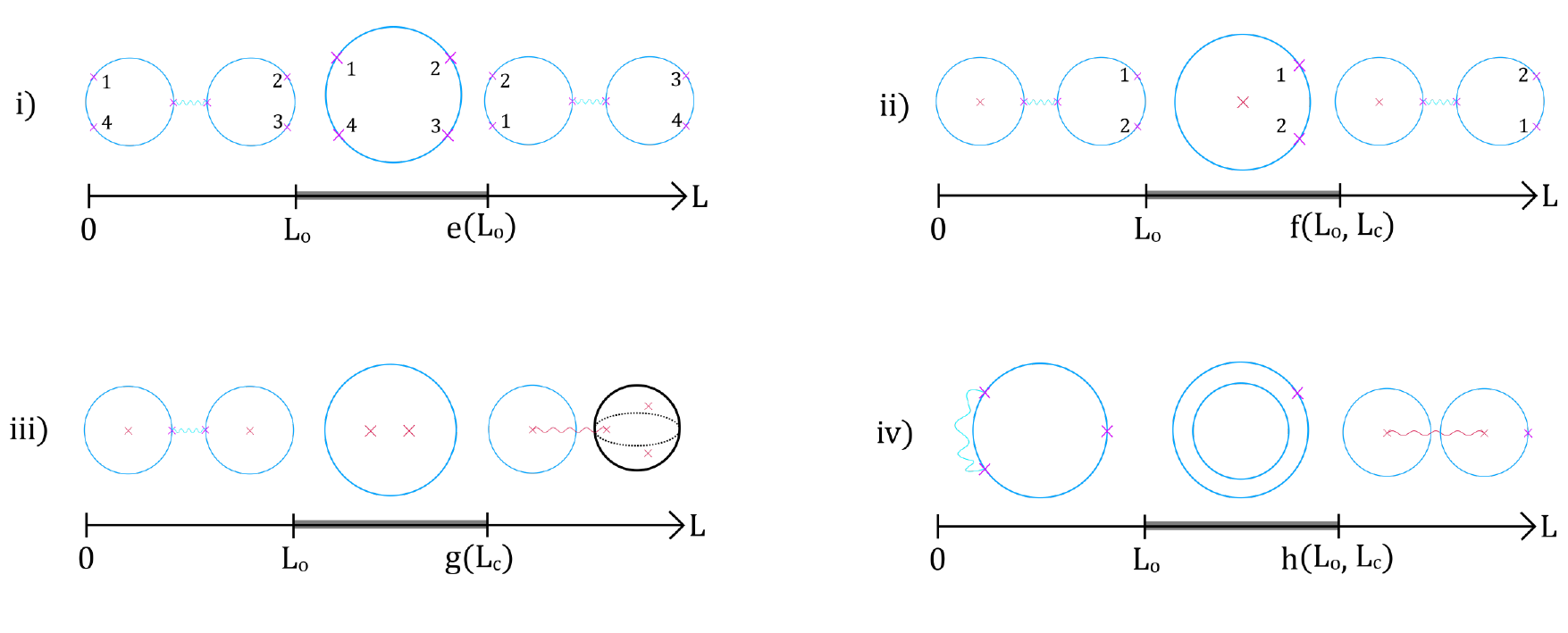}
	\caption{\label{fig:one}The decomposition of one-dimensional moduli spaces of hyperbolic open-closed string vertices. The figure is taken from~\cite{Cho:2019anu}. The author thanks Minjae Cho for the permission to use this figure.}
\end{figure}
\begin{enumerate}	
	\item \underline{Disk with four boundary punctures $\mathcal{V}^{0,0}_{1,4}(L_o)$}. The vertex region in this case is characterized by the inequality $L_o \leq L \leq e(L_o)$ where
	\begin{align} \label{eq:A1}
		\cosh (e (L_o)) = {2 \cosh(2L_o) + \cosh L_o + 1 \over \cosh L_o - 1} \, .
	\end{align}
	The regions $0 < L < L_o$ and $e(L_o) < L$ corresponds to the Feynman regions. Unlike the rest of the cases below, there are actually five more copies for these regions in the moduli space $\mathcal{M}^{0,0}_{1,4}(L_o)$ due to the permutation of boundary punctures. We only consider one copy for simplicity. 
	
	\item \underline{Disk with one bulk and two boundary punctures $\mathcal{V}^{0,1}_{1,2}(L_o,L_c)$}. The vertex region in this case is characterized by the inequality $L_o \leq L \leq f(L_o, L_c)$ where
	\begin{align} \label{eq:A2}
		\cosh(f(L_o,L_c)) = - 1 &+  2 \coth^2 {L_o \over 2}\bigg[
		\sqrt{\cosh L_c + 1 \over \cosh L_o -1} \cosh L_o \\
		&+ \sqrt{ (\cosh L_o + \cosh L_c)(\coth^2 L_o \coth^2{L_o \over 2}-1) \over \cosh L_o + 1} \sinh L_o
		\bigg]^2 \, . \nonumber
	\end{align}
	The regions $0 < L < L_o$ and  $f(L_o, L_c) < L $ correspond to the Feynman regions.
	
	\item \underline{Disk with two bulk punctures $\mathcal{V}^{0,2}_{1,0}(L_o, L_c)$}.  The vertex region in this case is characterized by the inequality $L_o \leq L \leq g(L_c)$ where
	\begin{align} \label{eq:A3}
	\cosh (g(L_c)) = {\cosh^2 {L_c \over 4} + \cosh L_c \over \cosh^2 {L_c \over 4} - 1} \, .
	\end{align}
	The regions $0 < L < L_o$ and $e(L_c) < L $ correspond to the Feynman regions.
	
	\item \underline{Annulus with one boundary puncture $\mathcal{V}^{0,0}_{2,\{1,0\}}(L_o, L_c)$}. The vertex region in this case is characterized by the inequality $L_o \leq L \leq h(L_o, L_c)$ where
	\begin{align} \label{eq:A4}
		\cosh (h(L_o,L_c)) = \sqrt{ \cosh L_c + \cosh L_o \over \cosh L_c - 1} \, .
	\end{align}
	The regions $0 < L < L_o$ and  $h(L_o, L_c) < L $ correspond to the Feynman regions.
\end{enumerate}
The schematics for these decompositions are shown in figure~\ref{fig:one}.

\section{The local coordinates for $\mathcal{V}^{0,2}_{1,0}(L_o \to 0, L_c \to \infty)$} \label{appB}

In this appendix we investigate the local coordinates for $\mathcal{V}^{0,2}_{1,0}(L_o\to 0,L_c \to \infty)$ on the UHP with uniformizing coordinates $z$ in detail and characterize its vertex region. To that end, place the bulk punctures at $z=i, iy$ using $PSL(2,\mathbb{R})$ freedom, where $0 < y < 1$. Notice $1 < y < \infty$ can be related to this situation by inversion. Next, perform the doubling trick around the real axis and replace the UHP with two bulk punctures with a complex plane with four bulk punctures. The punctures are placed at $z=i, iy, -iy, -i$ after doubling. Notice they lie on a line.

As mentioned in the main text, the local coordinates $\mathcal{V}^{0,2}_{1,0}(L_o\to 0,L_c \to \infty)$ should be described by Strebel differentials. After the doubling, the question becomes finding the Strebel differential on four-punctured sphere whose punctures at $z=i, iy, -iy, -i$. Luckily, this problem has already solved in~\cite{Erbin:2022rgx}---to translate the we need to conformally map $z=i, -iy, -i$ to $z'=0,1,\infty$. This can be done via the M\"obius map
\begin{align} \label{eq:map}
	z' = f(z) =  {(y-1)(z-i) \over (y+1) (z +i)}
	\, .
\end{align}
Observe that the puncture $iy$ is getting mapped to
\begin{align} \label{eq:xiy}
	f(iy) = {(y-1)^2 \over (y+1)^2 } \, ,
\end{align}
in the $z'$-plane. Given these facts, we find the Strebel differential on the $z$-plane to be
\begin{align} \label{eq:PS}
	\varphi \left(\mathcal{V}^{0,2}_{1,0}(L_o \to 0, L_c \to \infty)\right) = {4 (y-1)^2 (z^2 -y)^2 \over (z^2+1)^2 (z^2 +y^2)^2} dz^2 \, ,
\end{align}
after pulling it back the quadratic differential (2.14) of~\cite{Erbin:2022rgx} with the map $f$~\eqref{eq:map}. It can be checked that~\eqref{eq:PS} has double poles at $z=i, iy, -iy, -i$ with residue equal to $-1$ and can be written as a sum of its singular term around these punctures. Notice the zeros of~\eqref{eq:PS} are second order and they are located at $z = \pm \sqrt{y}$.

The local coordinates around the punctures associated with this differential can be found by
\begin{align}
	- {dw_i ^2 \over w_i^2} = \varphi \left(\mathcal{V}^{0,2}_{1,0}(L_o \to 0,L_c \to \infty)\right) \, .
\end{align}
Here $w_i$ for $i = 0,1$ are the local coordinates around $z= i, i y$ respectively. Remarkably, we can integrate~\eqref{eq:PS} exactly and find (up to an overall phase ambiguity) 
\begin{align}
	w_0 (z) = - \left({i \sqrt{y} - 1 \over i \sqrt{y} +1} \right)^2 { (z + i y)(z-i) \over (z-iy) (z+i)} \, , \quad
	w_1 (z) =  - \left({i \sqrt{y} - 1 \over i \sqrt{y} +1} \right)^2 { (z - i y)(z+i) \over (z+iy) (z-i)} \, ,
\end{align}
where we used the known positions of zeros of~\eqref{eq:PS} to fix the integration constants. The local coordinates for $y=1/2$ is shown in figure~\ref{fig:Deformationc}. We point out the critical graph of~\eqref{eq:PS} on the UHP is a union of a half-circle and the real line, where the latter corresponds to the boundary of the disk.

The lengths of the non-contractible simple geodesics as functions of the cross-ratio are also found in~\cite{Erbin:2022rgx}, see equation (2.17). We can use this expression to relate the length $\ell$ of the $p$-geodesic (which is a half-circle on the UHP) to $y$ using~\eqref{eq:xiy} and find
\begin{align}
	\ell = 4 \arctan \left[ {2 \sqrt{y} \over 1- y } \right] \, ,
\end{align}
see figure~\ref{fig:Deformationc}. From here we immediately see the vertex region for $\mathcal{V}^{0,2}_{1,0}(L_o\to 0,L_c \to \infty)$ is given by the inequality $0 \leq y \leq y_0 \equiv 3 - 2 \sqrt{2}$. This concludes our explicit characterization of the local coordinates and the vertex region.

%\bibliography{bibtemplate}{}
%\bibliographystyle{utphys}
%Copy bbl file if you don't want to deal with bib!

\providecommand{\href}[2]{#2}\begingroup\raggedright\endgroup

\end{document}